\documentclass[runningheads]{llncs}
\usepackage[T1]{fontenc}
\usepackage{graphicx}
\usepackage{amsfonts,amsmath,amssymb}
\usepackage{booktabs}
\usepackage{algorithm}
\usepackage{algorithmic}
\usepackage{todonotes}

\usepackage{tikz}
\usepackage{tcolorbox}

\def\NP{\textsf{NP}}
\def\P{\textsf{P}}

\def\XP{\textsf{XP}}
\def\cost{\textsf{cost}}
\def\OPT{\textsf{OPT}}
\def\I{\mathcal{I}}
\def\fixedenvymin{\textsc{Temporal-EnvyMin}}
\def\variableenvymin{\textsc{Temporal-EnvyMin-Seq}}

\DeclareMathOperator*{\argmin}{arg\,min}

\tcbset{size=small,top=1mm,bottom=1mm,middle=1mm}

\begin{document}
\title{Minimizing Cumulative Envy in Allocating a Sequence of Items\thanks{A preliminary version of this paper was presented at the 19th International Symposium on Algorithmic Game Theory (SAGT), 2026.}}
%
%
\author{Paul W. Goldberg \and
Isaac Robinson \and
Nicholas Teh}
%
%
\institute{University of Oxford, UK \\
\email{\{paul.goldberg,isaac.robinson,nicholas.teh\}@cs.ox.ac.uk}\\
}

\maketitle              
\begin{abstract}
    We study temporal fair division with indivisible goods that arrive sequentially and must be allocated irrevocably. In contrast to the usual online model, we assume that valuations and future arrivals are known in advance, and ask how unfairness evolves during the process. We introduce \emph{cumulative maximum envy}: the sum, over all rounds, of the maximum pairwise envy at that round. Equivalently, this is the area under the worst-envy curve, and it captures both the magnitude and the duration of envy. For a fixed arrival order, we show that the corresponding decision problem is strongly NP-complete and that minimizing this objective admits no constant-factor approximation unless P = NP, even under identical valuations and even under binary valuations. We complement these hardness results with a dynamic program that gives pseudopolynomial-time solvability for a constant number of agents, polynomial-time algorithms in further restricted settings, and an FPTAS for fixed $n$ under identical integer valuations. We then study a sequencing variant where the algorithm may choose the arrival order. This variant remains NP-complete even for two agents with identical valuations; however, a simple greedy algorithm achieves a $3/2$-approximation for $n=2$ agents, an $n/(n-1)$-approximation for any number of agents, and an additive guarantee depending on the maximum value of any good.
\end{abstract}

\section{Introduction}
Fair division of indivisible goods is a fundamental, well-studied problem at the intersection of computer science and economics \cite{brams1996fairdivision,moulin2003fairdivision}, with applications ranging from divorce settlements to distributed load balancing.
A central fairness notion in this context is \emph{envy-freeness}, which requires that every agent values their allocated bundle at least as much as any other agent's bundle. 
However, this strong requirement is not always achievable; consider, for example, the case of two agents and a single good that both value positively, which must be allocated to one of them.
As a result, recent years have seen considerable interest in relaxations of envy-freeness. 
Arguably, the two most prominent relaxations are \emph{envy-freeness up to one good (EF1)}~\cite{budish2011combinatorial,lipton2004approximately} and \emph{envy-freeness up to any good (EFX)}~\cite{caragiannis2019unreasonable,gourves2014near}.
An allocation is EF1 if, for any pair of agents, any envy can be eliminated
by removing one good from the envied agent's bundle, where goods are compared
using the envious agent's valuation; in particular, removing a most-valued good
is sufficient. It is EFX if the envy is eliminated even after removing a
least-valued good.

The concept of EF1 originates from the seminal work of Lipton et al.~\cite{lipton2004approximately}, which introduced the first algorithmic approach for computing approximately envy-free allocations of indivisible goods. Their \emph{envy-cycle elimination} algorithm remains a foundational tool and has since been used as a building block in tackling some of the most challenging problems in fair division, including the recent breakthrough on the existence of EFX allocations for three agents~\cite{chaudhury2024efxthreeagents}. Interestingly, EF1 emerged from an effort to bound envy: if each agent's maximum envy can be bounded by the largest value of a single good, then a polynomial-time algorithm can achieve this relaxed form of envy-freeness. This same line of reasoning also motivates the optimization problem of minimizing envy, which was considered by Lipton et al.~\cite{lipton2004approximately} (see also~\cite{nguyen2014minimizing}). The optimization perspective has received comparatively less attention than discrete relaxations such as EF1 and EFX.

While most prior works study the fair division problem in the \emph{offline} setting (i.e., all goods are immediately available and are allocated to agents in a single-shot), 
many real-world systems do not allocate all resources at once. Instead, indivisible goods become available over time and must be assigned as they appear: affordable housing units are completed in phases, public extraction permits are issued across multiple periods, and platforms assign arriving tasks or opportunities to workers. In such settings, stakeholders care not only about the \emph{final} allocation, but also about the \emph{trajectory}: large early imbalances can create lasting disadvantages, distort downstream behavior, or erode trust.

Several recent works have considered an \emph{online} model where goods arrive one at a time, and must immediately and irrevocably be allocated to agents.
This captures numerous application scenarios where goods need to be allocated in a sequential manner, such as assigning donations to food banks \cite{Aleksandrov2015}, allocating resources in cloud computing environments \cite{bei2022cloudcomputingallocation}, or allocating impressions in recommendation systems \cite{murray2020robustuncertain}.
In the standard online fair division model, it is usually assumed that each good's
valuation is not known until its arrival. 
Most work on online fair division evaluates fairness only at the end of the process, once all goods have arrived and been allocated. In practice, agents experience the allocation process as it unfolds, and limiting unfairness throughout the process can be important for maintaining balance, participation, and trust, especially in long-term allocation systems.

This motivation is central to recently-studied \emph{temporal fair division} model \cite{cookson2025temporalfairdivision,elkind2025temporal}, where the goal is to guarantee fairness not merely at the end, but at \emph{every prefix} during the allocation process given information about future arrivals and valuations for each good-agent pair.
This perspective provides a different lens on sequential fair division.
In the standard and commonly-studied online model, agents have limited to no information about future arrivals, and fairness is evaluated only at the end (after all goods have arrived).
In contrast to these standard online models with unknown future arrivals, the temporal fair division model assumes complete information about valuations and future arrivals while striving for typically stronger, \emph{cumulative} fairness guarantees \cite{cookson2025temporalfairdivision,elkind2025temporal}.\footnote{Note that without this ``cumulative fairness'' consideration, the problem reduces to the standard offline model.} Consider the allocation of pre-built affordable housing units that become available over time: while the housing authority knows when each unit will be ready, ensuring equitable distribution throughout the multi-year process—not just at the end—is essential for maintaining public trust and preventing systematic disadvantage to any group. Similarly, when allocating extraction rights for public natural resources across multiple time periods to private companies, temporal fairness ensures that no firm gains disproportionate early advantages that could distort market competition.

To quantify experienced unfairness over time, we take an optimization view rooted in classic envy minimization~\cite{lipton2004approximately}. For an allocation process, let $A_i^t$ denote agent $i$'s bundle after the
first $t$ goods, and define $\mathrm{envy}_{i,j}^t
:= \max\{0,\,v_i(A_j^t)-v_i(A_i^t)\}$.
Let $c_t:=\max_{i,j\in N}\mathrm{envy}_{i,j}^t$ be the maximum envy after the first $t$ goods have been allocated. The sequence $(c_1,\dots,c_m)$ is the worst-envy curve of the process. Our objective is $\sum_{t=1}^m c_t$, the area under this curve.\footnote{In online fair division, the term \emph{cumulative envy} is sometimes used for the maximum envy after the first $t$ goods have been allocated (see, e.g.,~\cite{benade2018envyvanish}), which is $c_t$ in our notation. Our objective instead adds up the maximum envy over all rounds.} 
This objective is a simple, parameter-free baseline: it uses the standard maximum-envy measure at each round and aggregates it over time, so it penalizes both large envy and envy that persists for many rounds. We do not claim that this is the only meaningful temporal objective. Rather, it is a natural optimization counterpart to prefix-based temporal fairness notions, and it remains meaningful even when maintaining a strong fairness guarantee at every prefix is impossible.

For each agent $i$, let $e_i^t := \max_{j\in N}\mathrm{envy}_{i,j}^t$ and $E_i := \sum_{t=1}^m e_i^t$.
Since $c_t=\max_{i\in N}e_i^t$, we have $\max_{i\in N} E_i
\le \sum_{t=1}^m c_t
\le \sum_{i\in N} E_i
\le n\max_{i\in N} E_i$.
Thus, our objective upper-bounds the cumulative envy experienced by every
agent and is within a factor $n$ of the largest individual cumulative envy.
In either of our two models, an optimal solution for our objective is therefore
an $n$-approximation for directly minimizing $\max_{i \in N} E_i$ over the same
feasible allocation processes. However, our objective need not distribute
envy evenly across agents. Directly minimizing $\max_{i \in N} E_i$, $\sum_{i \in N} E_i$,
or weighted versions of these quantities is a natural direction for future
work.
It would also be useful to study objectives that directly minimize the largest or total individual cumulative envy.

We investigate cumulative maximum envy minimization in two related settings. First, we study the standard temporal model where the arrival order is fixed and may be adversarially chosen as part of the input. Second, we study a sequencing variant in which the algorithm may choose the
order of the goods before allocating them. This models settings where all goods
are known and can be held briefly before release; for example, a food bank may
have several donated pallets in storage and choose which pallet to allocate
next, or a public agency may choose the order in which approved permits are
issued.



\subsection{Our Contributions}
We begin in Section~\ref{sec:intractability} by presenting several intractability results. We show that the fixed-order problem is NP-complete even under identical valuations and even under binary valuations. In both cases, we also prove that no constant-factor approximation is possible unless $\P = \NP$.

Motivated by these hardness results, in Section~\ref{sec:tractability} we give a dynamic programming algorithm that computes an optimal allocation sequence in time $\mathcal{O}((m+1)(V_{\max}m^2+1)(mV_{\max}+1)^{n^2}\cdot n^3 )$,
where $n$ is the number of agents, $m$ is the number of goods, and $V_{\max}$ is the maximum value of any good to any agent. This gives pseudopolynomial-time solvability when $n$ is constant, and polynomial-time solvability under binary valuations for constant $n$. If the number of goods is constant, an optimal solution can also be found by direct enumeration. We further show that binary identical valuations admit a simple polynomial-time optimal algorithm, and that identical integer valuations admit an FPTAS for every fixed $n$. This contrasts with the offline setting, where the minimum maximum envy cannot be approximated within any factor unless $\P = \NP$, even for two agents with identical valuations~\cite{lipton2004approximately}; the difference is that, under identical valuations, our objective is always at least a $(1-1/n)$ fraction of the total value of the goods.

In Section~\ref{sec:variable}, we explore a natural variant of the model in which the algorithm is allowed to choose the arrival order of goods (while an adversary pre-selects the agents' valuations over the goods).
We show that this variant remains \NP-complete even for just two agents and identical valuations.
On the positive side, when agents have identical valuations, we design a polynomial-time algorithm that achieves:
(i) a $3/2$-approximation for $n=2$ agents,
(ii) a $n/(n-1)$-approximation for general $n$, and
(iii) a $\frac{n-1}{n} \, V_\text{max}$-additive approximation to the optimal cost.

\subsection{Related Work}
We highlight the works most relevant to our focus on envy minimization in the temporal setting.

\medskip \noindent
\textbf{Offline Fair Division.} 
In the offline fair division model, all goods are available simultaneously and allocated at once.
Lipton et al.~\cite{lipton2004approximately} investigate envy minimization in this setting, aiming to bound the maximum envy between any pair of agents.
They introduce the well-known envy cycle elimination algorithm, which runs in polynomial time and guarantees an allocation where the \emph{maximum envy} is at most the highest value of any good.
This result also gives rise to the well-known EF1 concept.
They also study the problem of computing an allocation with minimum maximum envy. Since deciding whether this minimum is zero is NP-hard even for two agents with identical valuations, no approximation within any factor is possible unless $\P = \NP$. They therefore also consider the \emph{envy-ratio}; under identical valuations, they relate it to scheduling on identical machines and give a PTAS, as well as an FPTAS when the number of agents is constant. Nguyen and Rothe~\cite{nguyen2014minimizing} show that minimizing the envy-ratio is APX-hard in general, and give an FPTAS for a constant number of agents with additive valuations.

\medskip \noindent
\textbf{Online Fair Division.} 
In the online fair division model, goods arrive sequentially in a fixed order, and the algorithm has no knowledge of future items.
Due to this uncertainty, most works on this model focus on ensuring fairness guarantees at the final round, once all goods have arrived and been allocated.
Several works look at minimizing the maximum envy in this setting.
Benad\`{e} et al.~\cite{Benade2023Fair} (see also~\cite{benade2018envyvanish}) study a fully adversarial setting in which item values are chosen adaptively: an adversary observes the algorithm and all decisions made thus far, and selects the value of each incoming good to maximize envy at the end.
They prove that, even with just two agents, any online algorithm must incur a maximum envy of $\Omega(m^{r/2})$ for every constant $r<1$, where $m$ is the number of goods, and that allocating each good to a uniformly random agent nearly matches this bound.

Recognizing that adaptive adversaries may be unrealistically powerful, Halpern et al.~\cite{HalpernPsomasVermaXie2025} consider an \emph{oblivious adversary} model, where item values are fixed in advance and chosen to be worst-case with respect to the algorithm, but do not depend on the actual allocation decisions made in previous rounds.
In this setting, they show connections to the \emph{online multicolor discrepancy} problem and, in the special case of two agents, to \emph{online vector balancing} \cite{spencer1977}.
Other works give the algorithm partial information about future goods~\cite{neoh2025online}, let it see a few goods ahead~\cite{amanatidis2025onlinefairdivisionpersonalized}, or let it hold and reorder a bounded number of goods before allocating them~\cite{amanatidis2026buffers}; the last two also give guarantees that hold after every allocated good.
For approximate proportionality, Choo et al.~\cite{choo2026approximate} show that several natural greedy rules cannot guarantee any positive approximation to proportionality up to one good (PROP1) against an adaptive adversary, and Neoh and Teh~\cite{neoh2026closing} show that this holds for every online algorithm, and also for proportionality up to $k$ goods with any fixed $k$; positive guarantees are possible against an oblivious adversary~\cite{choo2026approximate,cohen2026oblivious} or when each agent's largest value for a good is known in advance~\cite{choo2026approximate,neoh2026closing}. Online fair division under budget constraints has also been studied~\cite{cohen2026budget}.

In contrast, our work assumes full information about the values of all goods from the outset, and focuses on cumulative envy minimization, aligning more with the temporal fair division model than with adversarial online models.

\medskip \noindent
\textbf{Temporal Fair Division.}
In the temporal fair division model (sometimes also referred to as the \emph{fully informed} online fair division model), the sequence of goods is known in advance, allowing the algorithm to make allocation decisions with full knowledge of future arrivals.
This setting enables us to introduce the study of fairness guarantees that are cumulative over time, rather than only at the final round.
Cookson et al.~\cite{cookson2025temporalfairdivision} introduce temporal fair division and give existence/algorithmic guarantees for achieving EF1-type fairness simultaneously at multiple points in time, with stronger positive results in special cases (e.g., two agents or structured preferences).
Elkind et al.~\cite{elkind2025temporal} formalize temporal EF1 (TEF1) for sequential item arrivals, show that deciding whether a TEF1 allocation exists is NP-hard, and provide polynomial-time algorithms in several structured cases.
He et al.~\cite{he2019achieving} maintain EF1 at every round by allowing limited reallocations (or what they call adjustments), and quantify how many such changes are necessary (including separations between informed vs. uninformed settings); in particular, with full information about future goods, no adjustments are needed for two agents, whereas for three or more agents EF1 cannot always be maintained at every round without adjustments, so TEF1 allocations need not exist.
Choi and Li~\cite{choi2026structured} study temporal versions of EFX and maximin share fairness, and Choi and Li~\cite{choi2026scheduling} further allow each good to be delayed by a bounded number of rounds before it is allocated. Choi et al.~\cite{choi2026mixed} identify settings with goods and chores in which TEF1 allocations exist and can be computed efficiently.
In contrast to this constraint-based line of work (maintaining temporal EF1 at every prefix), we take an optimization view and minimize the cumulative ``area under the envy curve'', which remains meaningful even when strong prefix-fairness constraints are infeasible.
Fairness over time has also been studied in repeated allocation, where the same set of items is allocated in every round~\cite{igarashi2023repeatedfairallocation,caragiannis2024repeatedly}.
Micheel and Wilczynski~\cite{micheel2024fairness} study a related repeated house allocation model in which every round is a complete matching, so each agent receives one object. They focus on envy-based fairness across the repeated assignments, together with Pareto efficiency in each round. Their model differs from ours in two main ways: they make a complete assignment in every round, whereas we allocate one indivisible good irrevocably at each step; and they study fairness conditions across repeated assignments, whereas we minimize cumulative maximum envy.
Lim et al.~\cite{lim2026repeated} study repeated matchings that maximize the minimum cumulative utility of an agent, either at the end or at every round.
Fairness over time has also been studied in other collective decision problems, such as temporal slot assignment~\cite{elkind2022slot} and perpetual or temporal voting~\cite{lackner2020perpetual,elkind2024temporalvoting}.

\section{Preliminaries} \label{sec:prelims}

For each positive integer $z$, we denote $[z] := \{1,\dots,z\}$.
An instance of the \emph{fair division} problem is a tuple $\I = (N, G, \mathbf{v} = (v_1,\dots,v_n))$, where $N = [n]$ is the set of (fixed) \emph{agents}, $G$ is a set of $m$ goods, and for each $i \in N$, the function $v_i\colon 2^G \rightarrow \mathbb{R}_{\geq 0}$ is the \emph{valuation function} of agent $i$.
We write $v$ instead of $v_i$ when all agents have identical valuation functions.
In the \emph{temporal fair division} problem, goods arrive sequentially according to some order $\pi$, and we must allocate each good (to an agent) as it arrives. Let $\Pi$ denote the space of all such arrival orders.
In this work, we consider both the setting where $\pi$ is chosen by an adversary (i.e., the standard temporal fair division setting) and the setting where $\pi$ can be chosen by the algorithm.

As with most works in the fair division literature, we assume that valuation functions are \emph{additive}, i.e., for any subset of goods $B \subseteq G$, $v_i(B) = \sum_{g \in B} v_i(\{g\})$.
For convenience, we write $v_i(g)$ instead of $v_i(\{g\})$ for a single good~$g$.
Every subset of goods in $G$ can also be referred to as a \emph{bundle}.
An \emph{allocation} $\mathcal{A} = (A_1,\dots, A_n)$ is a partition of the goods into bundles, with agent $i \in N$ receiving bundle $A_i$.
For each $i \in N$ and $t \in [m]$, let $A_i^t$ denote the partial allocation of agent $i$ \textit{after} the $t$-th arriving good is allocated to an agent and let $A_i^0 = \varnothing$ be the initial empty allocation, and $A_i^m = A_i$.
Equivalently, we can represent an allocation by an allocation sequence $S = (s_1,\dots, s_m) \in N^{m}$, where for each $t \in [m]$, the $t$-th entry $s_t$ specifies the agent that the $t$-th arriving good is allocated to.
Let the space of all such sequences be $\mathcal{S}$.

Next, we define \emph{envy} and the \emph{envy minimization} objective.
Let $S$ be the allocation sequence corresponding to $\mathcal{A}$, given some order $\pi$.
Then, for any $t\in[m]$, the \emph{envy} of agent $i\in N$ toward agent $j\in N\setminus\{i\}$ is $\mathrm{envy}_{i,j}^t(S,\pi)
:= \max\{0,\,v_i(A_j^t)-v_i(A_i^t)\}$;
in particular, $\mathrm{envy}_{i,i}^t(S,\pi)=0$.
The \emph{per-round cost} is $c_t(S,\pi)
:= \max_{i,j\in N}\mathrm{envy}_{i,j}^t(S,\pi)$,
and the \emph{cumulative maximum envy} is $\cost(S,\pi)
:= \sum_{t=1}^m c_t(S,\pi)$,
which we seek to minimize.

Maximum envy is a standard quantitative objective in both offline envy minimization~\cite{lipton2004approximately} and online envy minimization~\cite{benade2018envyvanish,HalpernPsomasVermaXie2025}. In most online models, fairness is evaluated only after the last item has been allocated. In the fully informed temporal model, however, future arrivals and valuations are known, so it is natural to ask for fairness over the trajectory of the allocation, not just at the end.
One stronger requirement would be to optimize every prefix simultaneously (as in the anytime optimality notion of Lim et al.~\cite{lim2026repeated} for repeated matching), or to impose a fairness constraint at every prefix as in temporal EF1~\cite{elkind2025temporal}. We instead optimize the total area under the worst-envy curve. This permits short-term imbalances when they are compensated later, but penalizes imbalances that are large or long-lasting. This matches settings where temporary disparities are acceptable but prolonged severe unfairness is not.

Finally, optimizing only the final maximum envy can be badly misaligned with minimizing cumulative exposure to envy. Example~\ref{eg_finalvscumulative} gives a linear separation even under identical valuations; conversely, cumulative maximum envy always controls final envy because $c_m(S,\pi)$ is one of the terms in $\sum_t c_t(S,\pi)$.

\begin{example} \label{eg_finalvscumulative}
    Consider an instance with $n=2$ and identical valuation function $v$.
    Fix any integer $q\ge 2$, let $m:=2q+1$ and $\varepsilon:=\frac{99}{2q}$.
    Then $m$ is odd and $(m-1)\varepsilon=99$. Let
    $\pi=(g_1,\dots,g_m)$ be a fixed arrival order where
    $v(g_t)=\varepsilon$ for every $t\le m-1$ and $v(g_m)=100$.
    For $n=2$, we have $c_t(S,\pi)=\left|v(A_1^t)-v(A_2^t)\right|$ for every $t \in [m]$.
    
    \begin{itemize}
    \item \emph{An allocation minimizing the final round cost.}
    Allocate $\{g_1,\dots,g_{m-1}\}$ to agent $1$ and $g_m$ to agent $2$.
    Then $v(A_1^m)=99$ and $v(A_2^m)=100$, so $c_m(S,\pi)=1$
    (and no allocation can have $c_m(S,\pi)<1$ since $v(G)=199$).
    Moreover, for each $t\le m-1$ we have $v(A_1^t)=t\varepsilon$ and $v(A_2^t)=0$,
    so $c_t(S,\pi)=t\varepsilon$. Thus, $\mathrm{cost}(S,\pi)=\sum_{t=1}^{m-1} t\varepsilon + 1
    =\varepsilon\frac{m(m-1)}{2}+1
    =\frac{99m}{2}+1$, which grows linearly in $m$.
    
    \item \emph{An allocation with much smaller cumulative cost.}
    Alternately allocate the first $m-1$ goods between the agents.
    Then for every $t\le m-1$, we maintain $| v(A_1^t)-v(A_2^t)|\le \varepsilon$,
    so $c_t(S,\pi)\le \varepsilon$; more precisely, $c_t(S,\pi)=\varepsilon$ on odd $t$
    and $c_t(S,\pi)=0$ on even $t$.
    Since $m-1$ is even, $\sum_{t=1}^{m-1} c_t(S,\pi)=\frac{m-1}{2}\varepsilon$.
    Allocating $g_m$ gives $c_m(S,\pi)=100$, so $\mathrm{cost}(S,\pi)=\frac{m-1}{2}\varepsilon+100
    =\frac{99}{2}+100=\frac{299}{2}
    <\frac{99m}{2}+1$ for all $m > 3$.
    \end{itemize}
\end{example}

\section{Intractability Results}\label{sec:intractability}
We first present several results highlighting the computational intractability of envy minimization as an objective in the temporal fair division setting. 

We begin by formalizing the decision problem.
\begin{tcolorbox}[title=\fixedenvymin]
    Given an instance $\I = (N, G, \mathbf{v})$, an order $\pi = (g_1,\dots,g_m)$ over the goods in $G$, and an integer $\gamma \in \mathbb{Z}_{\geq 0}$, can we find an allocation sequence $S$ such that $\textsf{cost}(S,\pi) \leq \gamma$?
\end{tcolorbox}
Then, we show that the problem is computationally intractable even in the restricted setting where agents have \emph{identical valuation functions}---that is, for any pair of agents $i,j \in N$ and good $g$, we have $v_i(g) = v_j(g) = v(g)$.
Moreover, we go one step further by showing that a constant-factor approximation algorithm is unlikely to exist.
\begin{theorem} \label{thm:intractable_identical}
    \fixedenvymin{} is strongly \NP-complete even when agents have identical valuations.
    Moreover, in this setting, there exists no constant-factor approximation algorithm, unless $\P = \NP$.
\end{theorem}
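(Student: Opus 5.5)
Membership in \NP{} is immediate: given an allocation sequence $S$, we compute all prefix bundles and the per-round costs $c_t(S,\pi)$ in polynomial time and compare $\cost(S,\pi)$ with $\gamma$. For strong hardness, and at the same time inapproximability, we reduce from \textsc{3-Partition}, which is strongly \NP-complete. An instance consists of $3k$ integers $a_1,\dots,a_{3k}$ with $B/4<a_\ell<B/2$ and $\sum_\ell a_\ell=kB$, and asks whether they split into $k$ triples each summing to $B$.

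We will use $n=k$ agents with identical valuations. We cannot ask for cost $0$ outright, because the first good already creates envy. The plan is therefore a \emph{gap} reduction: yes-instances get cost at most some baseline $\beta$, and no-instances get cost at least $\rho\beta$ for an arbitrarily large constant (or polynomial) $\rho$. This rules out a $\rho$-approximation unless $\P=\NP$, and also gives the decision hardness with $\gamma=\beta$. All numbers will stay polynomial in $k$ and $B$, so the hardness is strong.

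The construction will be as follows.
\begin{itemize}
\item First present the $3k$ goods of values $a_\ell$, in any order.
\item Then present a long ``padding'' phase of $L$ rounds.
\item In each padding round, place a block of $k$ goods of equal value $B$ (or $k$ goods of value $1$, repeated); the exact block shape is to be checked.
\end{itemize}
The idea is that when the bundles after the first phase are all equal to $B$, each padding block can be given one good per agent. The envy then returns to $0$ at the end of every block, and the cost in the padding phase stays bounded by about $L\cdot B$.

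If instead the first-phase bundles are unequal, some agent is at least $1$ below another, because all values are integers. Identical padding goods keep differences invariant modulo the block structure, so this gap of at least $1$ persists. More precisely, we need that at each block boundary some pair still differs by at least $1$, so that cost at least $L$ accumulates there.

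To make the ratio large, we will refine the padding. We use blocks of $k$ goods of value $1$ each, and we count only the block-boundary rounds. In the yes case the boundary cost is $0$ and each intra-block round has cost $1$, which gives about $L(k-1)$. This is too large, so the padding should instead use goods whose intra-block envy is tiny relative to the persistent gap. To achieve this, scale the $a_\ell$ up by a factor $M$, making any imbalance at least $M$, while padding goods have value $1$. The yes case then costs at most $\mathrm{poly}(k,B)M+Lk$, and the no case costs at least $LM$. Choosing $M=\rho k$ and $L\gg \mathrm{poly}(k,B)M$ gives ratio about $\rho$, with everything still polynomially bounded.

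The main obstacle is the no-case lower bound. Value-$1$ padding goods can be allocated unevenly so as to \emph{close} the initial imbalance: give extra padding goods to the poorer agents. The gap of $M$ can therefore be repaired within about $kM$ rounds, which caps the no-case cost at roughly $kM\cdot M$ rather than $LM$. To handle this, I would first try making the padding goods sparse. Interleave value-$0$ rounds, or better, interleave the padding with repeated copies of a rigid gadget, so that any repair forces extra envy. Alternatively, order the padding so that the total padding value equals exactly $k$ times a fixed amount, while the repair needs a shift of value $M$ that costs at least $M$ per round over $\Omega(L)$ rounds. Concretely, I would bound the no-case cost from below by showing that closing a deficit of $M$ using unit goods forces a cumulative cost of at least $\Omega(M^2)$ during the catch-up. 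I would then set the parameters so that $M^2$ dominates the yes cost $\mathrm{poly}(k,B)M+Lk$ by a factor $\rho$, taking $L$ small and $M$ a large polynomial. Verifying this quadratic catch-up bound rigorously is the step I expect to need the most care.
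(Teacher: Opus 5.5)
\textbf{There is a genuine gap.} Your overall plan is the right one: a gap reduction from \textsc{3-Partition} in which rounds after the partition phase amplify any leftover imbalance. But the proposal stops short of a proof. The padding is never fixed, and the no-case lower bound, which you yourself flag as the crux, is only conjectured.

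The obstacle is self-imposed. Valuations map into $\mathbb{R}_{\ge 0}$, so padding goods may have value $0$. Such a good changes no bundle value, so $c_t=c_{t-1}$ throughout the padding and no allocation can repair anything. Concretely, with your $k$ agents, present the $3k$ goods of value $a_\ell$ (unscaled) followed by $Z$ zero-valued goods.
\begin{itemize}
\item In a yes-instance, allocating the triples gives $c_{3k}=0$. Each of the first $3k$ rounds costs at most $kB$, so $\OPT\le 3k^2B$.
\item In a no-instance, some bundle sum is at least $B+1$ and another at most $B-1$. (The bounds $B/4<a_\ell<B/2$ force any split with all sums equal to $B$ to consist of triples.) So $c_t\ge 2$ for all $t\ge 3k$, and every allocation costs more than $Z$.
\end{itemize}
Taking $Z=\lceil \rho\cdot 3k^2B\rceil+1$ gives strong \NP-hardness (with $\gamma=3k^2B$) and rules out a $\rho$-approximation, with all numbers polynomial in $k$ and $B$. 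This is exactly the mechanism of the paper's proof. The paper additionally uses $k+2$ agents, two goods of value $10T$, and partition goods of value $x_j+3T$, so that the yes-case cost is exactly $10T(3k+1)$; this gives a clean decision threshold. It then appends further zero-valued goods for the gap.

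If you want to keep unit padding, your catch-up bound is a one-line fact, but your parameter choice is wrong. Under identical valuations, $c_t=U_t-L_t$, where $U_t$ and $L_t$ are the largest and smallest bundle values. Allocating a good of value $1$ keeps $U_t\ge U_{t-1}$ and $L_t\le L_{t-1}+1$, so $c_t\ge c_{t-1}-1$. After scaling by $M$, the no-case gap at round $3k$ is at least $2M$, so the $L$ padding rounds cost at least $\sum_{j=1}^{L}(2M-j)$. This is $\Omega(M^2)$ only when $L=\Omega(M)$. With $L$ \emph{small}, as you suggest, you collect only about $LM$, and your claim that $M^2$ dominates the yes cost fails. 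With $L=M$ and round-robin padding, the yes cost is at most $3k^2BM+M$ and the no cost is at least $M^2$. So $M=\lceil\rho(3k^2B+1)\rceil+1$ gives a ratio above $\rho$ with polynomially bounded numbers. That repairs your route, but zero-valued padding makes both the scaling and the catch-up analysis unnecessary.
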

\begin{proof}
We reduce from the \textsc{3-Partition} problem, which is known to be strongly NP-complete \cite{garey1975complexitymultiprocessor}.
    An instance of \textsc{3-Partition} is given by a multiset of positive integers $X = \{x_1,\dots,x_{3k}\}$ (with $\sum_{j = 1}^{3k} x_j = kT$); it is a yes-instance if there exists a partition of $X$ into $k$ subsets $X'_1,\dots,X'_k$ (each containing three integers) such that $\sum_{x \in X'_j} x = T$ for all $j \in [k]$, and a no-instance otherwise.
    Note that for each integer $x \in X$, we can without loss of generality assume that $x < T$ (otherwise it is trivially a no-instance).

    Construct an instance of our problem, where $N = [k + 2]$ and we have $m := 9k+7$ goods arriving in the order $\pi = (g_1,g_2,\dots, g_{3k+1}, g_{3k+2}, g_{3k+3}, \dots, g_{9k+7})$, with valuations defined as follows:
    
    \begin{equation*}
    v(g_j) = 
        \begin{cases} 
      10T & \text{ if } j = 1; \\
      x_{j-1} + 3T & \text{ if }  j \in \{2,\dots,3k+1\}; \\
      10T & \text{ if }  j = 3k+2; \\
      0 & \text{ if } j \in \{3k+3,\dots,9k+7\}.
      \end{cases}
    \end{equation*}

    We first show that \fixedenvymin{} is NP-complete (it is easy to see that the problem is in NP, we thus focus on proving NP-hardness).
    We do so by proving that there exists a solution $X'_1,\dots, X'_k$ to \textsc{3-Partition} if and only if there exists an allocation $\mathcal{A}$ (and hence corresponding allocation sequence $S$) such that $\textsf{cost}(S,\pi) \leq 10T \cdot (3k+1)$ for the above constructed instance.

    ($\Rightarrow$) Suppose there exists a solution $X'_1,\dots, X'_k$ to \textsc{3-Partition}.
    Consider the sequence $S = (s_1,\dots,s_m)$: 
    \begin{equation*}
        s_j = 
        \begin{cases}
            1 & \text{ if } j \in \{1, 3k+3, \dots, 9k+7\}; \\
            2 & \text{ if } j = 3k+2; \\
            \ell + 2 & \text{ if } x_{j-1} \in X'_\ell \text{ for } \ell \in [k]
        \end{cases}
    \end{equation*}
    Observe that for all $t \in \{1,\dots, 3k+1\}$, every agent $i \geq 3$ holds at most three positively-valued goods, whose total value is at most $T + 9T = 10T$. Thus, $\max_{i \in N} v(A_i^t) = 10T$ and $\min_{i \in N} v(A_i^t) = 0$, so we have that $c_t(S,\pi) = 10T$. 
    Moreover, since each agent in $N \setminus \{1,2\}$ has bundle value $T + 3 \cdot 3T = 10T$, we get that $c_{3k+2}(S,\pi) = 0$.
    Then,
    \begin{align*}
        \textsf{cost}(S,\pi) & = \sum_{t=1}^{3k+2} c_t(S,\pi) + \sum_{t=3k+3}^{9k+7} c_t(S,\pi)\\
        & = 10T \cdot (3k+1) + 0 + 0= 10T \cdot (3k+1).
    \end{align*}
    
    ($\Leftarrow$) Suppose there does not exist a solution to \textsc{3-Partition}.
    This means that there does not exist a partition of integers in $X = \{x_1,\dots, x_{3k}\}$ into $k$ subsets of equal sum $T$.
    Equivalently, there is no way to partition the $3k$ goods $\{g_2,\dots,g_{3k+1}\}$ into $k$ bundles each of total value $10T$.
    
    Let $i^* \in N$ be the agent that receives good $g_{3k+2}$, i.e., $g_{3k+2} \in A_{i^*}$.
    We split our analysis into two cases.
    
    \paragraph{Case 1: No good in $\{g_1,\dots,g_{3k+1}\}$ allocated to agent~$i^*$.}
        Let $g_1$ be allocated to agent $i' \in N \setminus \{i^*\}$. Then, $A_{i'}^1 = \{g_1\}$ and $v(A_{i'}^1) = 10T$.
        Now, since no goods in $\{g_1,\dots,g_{3k+1}\}$ was allocated to agent~$i^*$, for all $t \in \{1,\dots, 3k+1\}$, we have that $A_{i^*}^t = \varnothing$, giving us $c_t(S,\pi) \geq 10T$.
        At round $3k+2$, agent $i^*$ receives a good valued at $10T$ and thus has value exactly $10T$ (since she received nothing earlier).
        If $c_{3k+2}(S,\pi)=0$, then every agent must have value exactly $10T$ at round $3k+2$ (since the total value of the first $3k+2$ goods is $10T(k+2)$, equality across $k+2$ agents forces each to have value $10T$).
        The agent $i'$ holding the first good valued at $10T$ (i.e., $g_1$) cannot hold any other positively-valued good (each such good has value $>3T$), else it would exceed $10T$.
        Therefore, the remaining $k$ agents must partition the $3k$ goods of values $x_{j-1}+3T$ into $k$ bundles each of value $10T$. 
        Since each such good has value strictly between $3T$ and $4T$, each bundle must contain exactly $3$ goods, and the $x_j$ parts in each bundle must sum to $T$, which is a \textsc{3-Partition} solution, giving us a contradiction.
        Thus, $c_{3k+2}(S,\pi)\ge 1$, and together with $\sum_{t=1}^{3k+1} c_t(S,\pi) \ge 10T(3k+1)$, we get $\textsf{cost}(S,\pi)>10T(3k+1)$.
        
    \paragraph{Case 2: Some good in $\{g_1,\dots,g_{3k+1}\}$ allocated to agent~$i^*$.} 
        Let $g_1$ be allocated to agent $i' \in N$ (note that it could be $i' = i^*$).
        Since $i^*$ is also allocated good $g_{3k+2}$, we have that $v(A_{i^*}^{3k+2}) \geq 3T + 10T = 13T$.
        If $i'=i^*$, then $3k$ goods are distributed among $k+1$ agents, so some agent gets at most two goods from $\{g_2,\dots, g_{3k+1}\}$.
        If $i' \neq i^*$, then $i^*$ must have taken at least one of the $3k$ goods (since $g_1$ went to $i'$), leaving at most $3k-1$ such goods for the remaining $k$ agents in $N\setminus\{i',i^*\}$; and by pigeonhole one of them gets at most two goods from $\{g_2,\dots, g_{3k+1}\}$.
        Now, for each $j \in \{2,\dots,3k+1\}$, we know that $v(g_j) < T + 3T = 4T$ (by our assumption that $x < T$ for each $x \in X$).
        Then, $v(A_i^{3k+2}) < 8T$.
        This means that $c_{3k+2}(S,\pi) \geq v(A_{i^*}^{3k+2}) - v(A_i^{3k+2}) > 13T - 8T = 5T$.
        There are $6k+5$ zero-valued goods after round $3k+2$, i.e., rounds $t = 3k+3,\dots, 9k+7$.
        Since these goods do not change any bundle values, $c_t(S,\pi)= c_{3k+2}(S,\pi)$ for all such $t$.
        Thus,
        \begin{align*}
            \sum_{t'=3k+3}^{9k+7} c_{t'} & = (6k+5) \cdot c_{3k+2}(S,\pi) > (6k + 5) \cdot 5T > 10T(3k+1),
        \end{align*}
        which implies $\text{cost}(S,\pi) > 10T(3k+1)$.

    Next, we argue that in (the same) setting where agents have identical valuations, there exists no constant-factor approximation algorithm, unless P = NP.
    
    Since \textsc{3-Partition} is strongly NP-complete, meaning even when $T = \textsf{poly}(n)$, the problem remains NP-complete. 
    Thus, we can assume $T = \textsf{poly}(n)$.

    Then, note that in the reduction above, we showed that given an instance $\I$ and order $\pi$, determining whether we can find an allocation sequence $S$ such that $\textsf{cost}(S,\pi) \leq 10T \cdot (3k+1)$ is NP-complete.
    Equivalently,
    \begin{itemize}
        \item any yes-instance admits a sequence $S$ with $\cost(S,\pi) \leq 10T \cdot (3k+1)$, and in particular $c_{3k+2}(S,\pi) = 0$;
        \item any no-instance will force any sequence $S$ to have $c_{3k+2}(S,\pi) \geq 1$.
    \end{itemize}
    Let $B:=10T(3k+1)$ and fix any constant $\alpha\ge 1$. To amplify the gap, append $Z_\alpha:=\left\lceil B(\alpha-1)\right\rceil+1$ zero-valued goods after $g_{9k+7}$. In a yes-instance, the allocation above still has cost at most $B$, because the envy is already zero from round $3k+2$ onward. In a no-instance, every allocation has positive envy after round $3k+2$; since all appended goods have value zero, this envy persists through the appended rounds. 
    Thus, every allocation has cost strictly larger than $B+Z_\alpha > \alpha B$.
    Therefore, given a \textsc{3-Partition} instance, run the assumed
    $\alpha$-approximation on the corresponding \textsc{Temporal-EnvyMin}
    instance and compare its returned cost with $\alpha B$. If the original
    \textsc{3-Partition} instance is a yes-instance, the returned cost is at most
    $\alpha B$. If it is a no-instance, every feasible allocation has cost greater
    than $\alpha B$. The assumed approximation algorithm would therefore decide
    \textsc{3-Partition} in polynomial time, which is impossible unless
    $\P = \NP$.
\end{proof}

Notably, the above hardness result continues to hold even when the maximum value of any good is bounded by a polynomial in $n$, owing to the strong NP-completeness property of the \textsc{3-Partition} problem used in the reduction.

In the offline setting, the minimum maximum envy cannot be approximated within any factor simply because it may be zero~\cite{lipton2004approximately}. This argument does not apply directly to our objective under identical valuations, since the first good with positive value always creates envy, so even yes-instances have positive optimal cost. Our reductions instead end with zero-valued goods, so that any envy remaining after the last positively valued good is counted in each of the remaining rounds.

We now turn to a different restricted setting in which agents have \emph{binary valuation functions}---that is, for each agent $i \in N$ and good $g$, $v_i(g) \in \{0,1\}$.
In this case, we establish a second intractability result by reducing from a known NP-hard problem in the standard (offline) fair division literature: deciding whether there exists a complete envy-free allocation under binary valuations, a problem we refer to as \textsc{Binary-EF}.
This hardness was first shown implicitly by Aziz et al.~\cite{aziz2015fairassignmentordinal} and later made explicit by Hosseini et al.~\cite{hosseini2020fairdivisionwithholding}.
As in the previous case, we show that a constant-factor approximation algorithm is unlikely to exist.
\begin{theorem} \label{thm:intractable_binary}
    \fixedenvymin{} is \NP-complete even when agents have binary valuations.
    Moreover, in this setting, there exists no constant-factor approximation algorithm, unless $\P = \NP$.
\end{theorem}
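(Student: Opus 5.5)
Membership in \NP{} is immediate, since the cost of a given sequence $S$ can be computed in polynomial time. For hardness we reduce from \textsc{Binary-EF}: given agents $N=[n]$, goods $G=\{h_1,\dots,h_r\}$ and binary additive valuations, decide whether a complete envy-free allocation exists. We mirror the proof of Theorem~\ref{thm:intractable_identical}. First the goods $h_1,\dots,h_r$ arrive in some fixed order, with the same binary valuations. Then a long tail of $L$ goods arrives that every agent values at $0$. Zero-valued goods never change bundle values, so for every $t>r$ we have $c_t(S,\pi)=c_r(S,\pi)$.

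Set the threshold $\gamma := r$, or more carefully $\gamma := B$, where $B$ is a polynomial upper bound on the cost of the first $r$ rounds. Under binary valuations each per-round cost satisfies $c_t\le t\le r$, so $B:=r^2$ works.

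\textbf{Yes-instance.} Take an envy-free allocation and realise it in the given order. The first $r$ rounds contribute at most $r^2$. Since the final allocation is envy-free, $c_r=0$, and the tail contributes nothing. The total cost is therefore at most $B$.

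\textbf{No-instance.} Every complete allocation of $h_1,\dots,h_r$ has some agent with positive envy. Since values are integers, $c_r\ge 1$. This envy persists through all $L$ tail rounds, so the cost is at least $L$.

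Choosing $L:=B+1$ separates the two cases and gives \NP-hardness. For inapproximability we fix a constant $\alpha\ge 1$ and take $L:=\lceil \alpha B\rceil+1$, which is still polynomial. Then yes-instances have optimum at most $B$, while no-instances have optimum greater than $\alpha B$. An $\alpha$-approximation would therefore decide \textsc{Binary-EF}, exactly as in the argument for Theorem~\ref{thm:intractable_identical}.

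The main obstacle is making sure the appended goods are genuinely binary and zero for everyone, and that nothing else interferes. Zero is an allowed binary value, so the first part is fine. The key points are that the final-round envy $c_r$ is exactly $0$ or at least $1$ by integrality, and that the prefix cost is polynomially bounded, so no gap amplification beyond padding is needed. Strong \NP-hardness is not claimed here, and none is needed because all values lie in $\{0,1\}$. The remaining care is to cite the hardness of \textsc{Binary-EF} (Aziz et al.; Hosseini et al.) for complete allocations with arbitrary numbers of agents. The reduction works for any $n$, so no restriction on $n$ arises.
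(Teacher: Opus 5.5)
Your proposal is correct and follows essentially the same route as the paper. Both reduce from \textsc{Binary-EF}, append a polynomially long tail of zero-valued goods so that any final envy (at least $1$ by integrality) persists, and use the polynomial bound on the prefix cost for yes-instances. The paper uses $\widehat{m}^3$ tail goods with threshold $\widehat{m}^3-1$ and then appends $\lceil B(\alpha-1)\rceil+1$ further zero goods for the gap, whereas you take threshold $B=r^2$ with tail length $B+1$ or $\lceil \alpha B\rceil+1$. The only thing to drop is your tentative threshold $\gamma := r$: your argument only bounds the yes-instance prefix cost by about $r^2/2$, so $\gamma := r$ is not justified, and $\gamma := B = r^2$ is the right choice.
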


\section{Tractability in Restricted Settings} \label{sec:tractability}
Given the computational hardness of finding an envy-minimizing allocation sequence, even under two distinct restricted settings (identical valuations or binary valuations), a natural follow-up question is whether there exists any restricted setting for which an optimal solution can be computed efficiently.

We begin with a simple warm-up result.
The most favorable remaining setting is when valuations are both binary \emph{and} identical.
While this does not exactly reduce the problem to one with identically-valued goods,\footnote{This setting has also been studied in other fair division contexts; see, e.g., Elkind et al.~\cite{Elkind2024choresbudget}.} it is closely related.
In this case, a straightforward greedy algorithm (which allocates each arriving good to an agent with the currently lowest total bundle value) produces an allocation with minimum cost.
This greedy rule is a special case of the algorithm that Elkind et al.~\cite{elkind2025temporal} use to obtain TEF1 allocations under generalized binary valuations; we show that, under binary and identical valuations, it also minimizes the maximum envy at every round.
\begin{proposition} \label{prop:tractability_binaryidentical}
    Under binary and identical valuations, an optimal sequence can be computed in polynomial time.
\end{proposition}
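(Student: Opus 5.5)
We prove that the greedy rule is optimal by showing it minimizes $c_t$ at every round $t$ simultaneously. Since $\cost$ is a sum of per-round costs, this gives optimality directly. The greedy rule allocates each arriving good to an agent of currently minimum bundle value; ties are broken arbitrarily, and zero-valued goods may go anywhere. It clearly runs in $\mathcal{O}(mn)$ time.

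\textbf{Lower bound.} Fix an arbitrary allocation sequence $S$ and a round $t$. Let $p_t$ be the number of goods with value $1$ among $g_1,\dots,g_t$. Under identical valuations,
\[
c_t(S,\pi)=\max_{i} v(A_i^t)-\min_{i} v(A_i^t).
\]
The bundle values are nonnegative integers summing to $p_t$. If $n \mid p_t$, the bound is $c_t \ge 0$, which is trivial. Otherwise the values cannot all be equal, so their maximum and minimum are distinct integers, and hence $c_t(S,\pi)\ge 1$. So every sequence satisfies $c_t(S,\pi)\ge \mathbb{1}[n\nmid p_t]$.

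\textbf{Greedy attains the bound.} We prove by induction on $t$ that under greedy all bundle values lie in $\{\lfloor p_t/n\rfloor,\lceil p_t/n\rceil\}$; in particular, the maximum and minimum differ by at most $1$. The base case $t=0$ holds since all bundles are empty. For the inductive step, a zero-valued good changes nothing. For a good with value $1$, suppose first that the values were not all equal. Then greedy raises one agent from the lower level $\lfloor p_t/n\rfloor$ to the upper level, and the range stays at most $1$. Suppose instead that all values were equal to $p_t/n$. Then greedy creates a range of exactly $1$, with values $p_t/n$ and $p_t/n+1$, and these are the floor and ceiling of $p_{t+1}/n$. In either case the invariant is preserved.

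Given the invariant, the case split is immediate. When $n\mid p_t$, the values sum to $p_t$ and all lie in $\{p_t/n\}$, so they are all equal and $c_t=0$. Otherwise $c_t\le 1$. Thus greedy matches the lower bound at every round. Summing over $t$ shows greedy is optimal, with
\[
\cost = \#\{t : n\nmid p_t\}.
\]

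The main point requiring care is the invariant, in particular the case where all values are equal before a value-$1$ good arrives. It is not a genuine obstacle, however. The key observation is simply that the per-round lower bound depends only on $p_t$, and this is exactly what makes a myopic rule globally optimal.
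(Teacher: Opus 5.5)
Your proof is correct and takes essentially the same approach as the paper: the same greedy rule, the same per-round lower bound ($c_t \ge 1$ exactly when $n \nmid p_t$), and an inductive invariant showing that greedy attains this bound in every round. Your floor/ceiling form of the invariant absorbs the paper's closing counting argument that a range of at most $1$ forces equality when $n \mid p_t$. One step is skipped, in the not-all-equal case: if the raised agent was the last one on the lower level, then $n \mid p_{t+1}$ and all values now equal $p_{t+1}/n$, and you should say so to conclude the full invariant rather than merely the range bound.
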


Next, for the main result of this section, we present a dynamic programming algorithm that runs in time $\mathcal{O}( (m+1)(V_{\max}m^2+1)(mV_{\max}+1)^{n^2}\cdot n^3 )$, where $V_\text{max} := \max_{i \in N, g \in G} v_i(g)$ is the maximum value of any good in any agent's view (and we can assume valuations are encoded in unary and scaled to integers).
For $t \in \{0,\dots,m\}$, let $M^t \in \{0,\dots, m\cdot V_\text{max}\}^{n\times n}$ be the \emph{valuation matrix} after the first $t$ goods have arrived: $M_{i,j}^t$ is the value agent~$i$ has for agent $j$'s bundle thus far, and $M^0$ is the all-zero matrix.

The algorithm is presented as follows (Algorithm~\ref{alg:DP}).
\begin{algorithm}[h]
\caption{Dynamic programming algorithm that returns an optimal sequence}
\begin{flushleft} 
    {\bf Input}: Instance $\I = (N, G, \mathbf{v})$ and order $\pi = (g_1, \dots, g_m)$. \\
    {\bf Output}: Allocation sequence $S = (s_1,\dots,s_m)$.
\end{flushleft}
\begin{algorithmic}[1]
\STATE Let $M^0$ be the all--zero $n\times n$ matrix
\STATE Create hash maps $\text{DP}[0],\dots,\text{DP}[m]$ mapping keys $(c,M)$ to predecessor pointers
\STATE Initialize $\mathrm{DP}[0][(0,M^0)] \gets \varnothing$

\FOR{$t = 0,\dots, m-1$}
  \FORALL{states $(c,M)$ in $\mathrm{DP}[t]$}
    \FOR{$s=1, \dots, n$}
      \STATE $M' \gets M$ (deep copy)
      \FOR{$i=1, \dots, n$}
        \STATE $M'_{i,s} \gets M'_{i,s} + v_i(g_{t+1})$
      \ENDFOR
      \STATE $c' \gets c + \max_{i,j} \max\{0, M'_{i,j}-M'_{i,i}\}$ \label{alg:line:cumulativecost}
      \IF{$(c',M') \notin \text{DP}[t+1]$} \label{alg:line:conditionfilter}
        \STATE $\mathrm{DP}[t+1][(c',M')] \gets ((c,M), s)$ 
      \ENDIF
    \ENDFOR
  \ENDFOR
\ENDFOR
    \STATE Let $(c^*,M^*):=\argmin_{(c, M) \in \text{DP}[m]} c$ \label{line:alg:DP_c*M*}
\STATE Initialize $S = (s_1,\dots,s_m) = (0, \dots, 0)$ and $(c,M)\gets (c^*,M^*)$
  \FOR{$t = m, \dots, 1$} \label{line:alg:witnessing_start}
    \STATE $((c_{\text{prev}},M_{\text{prev}}), s_t) \gets \mathrm{DP}[t][(c,M)]$
    \STATE $S[t] \gets s_t$
    \STATE $(c,M)\gets (c_{\text{prev}},M_{\text{prev}})$
  \ENDFOR \label{line:alg:witnessing_end}
  \RETURN $S = (s_1,\dots,s_m)$
\end{algorithmic}
\label{alg:DP}
\end{algorithm}

Intuitively, the algorithm explores every possible way to assign the first $t$ goods while precisely tracking the valuation matrix $M$ and cost (i.e., envy) incurred $c$.
At each step, it branches on which agent receives the next good, updates the corresponding column of $M$, accrues the cost incurred, and records a predecessor pointer.
At the end (layer $m$), it scans through all reachable pairs $(c, M)$ and returns the one minimizing cost, reconstructing a witnessing allocation by following the pointers backward.
Our result is as follows.

\begin{theorem} \label{thm:alg_DP}
    Algorithm~\ref{alg:DP} returns an optimal sequence in time $\mathcal{O}( (m+1)(V_{\max}m^2+1)(mV_{\max}+1)^{n^2}\cdot n^3)$.
\end{theorem}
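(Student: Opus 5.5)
We will prove the theorem in two parts: correctness of the dynamic program, then the running-time bound.

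For correctness, the key fact is that the per-round cost $c_{t+1}(S,\pi)$ depends only on the valuation matrix $M^{t+1}$. Indeed, $\mathrm{envy}^{t+1}_{i,j}=\max\{0,M^{t+1}_{i,j}-M^{t+1}_{i,i}\}$, and $M^{t+1}$ is obtained from $M^t$ by adding $v_i(g_{t+1})$ to entry $(i,s_{t+1})$ for every $i$. This is exactly the update in lines 7--11.

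We will establish the following invariant by induction on $t$. The set of keys of $\mathrm{DP}[t]$ equals the set of pairs $(\sum_{t'\le t}c_{t'}(S',\pi),\,M^t)$ over all prefixes $S'=(s_1,\dots,s_t)\in N^t$. Moreover, for each key, following predecessor pointers back to $\mathrm{DP}[0]$ yields a prefix realizing that pair.

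The base case $t=0$ is immediate. For the inductive step, every prefix of length $t+1$ extends some prefix of length $t$, whose pair lies in $\mathrm{DP}[t]$ by the induction hypothesis. The loop over $s$ then generates the extended pair. Conversely, every generated pair comes from such an extension.

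The filter at line~\ref{alg:line:conditionfilter} discards only duplicate keys, and it keeps a valid pointer for each key. Note that the key includes $c$, so no cost information is lost. This is the point needing care: merging states with the same $M$ but different $c$ would also be valid, since future costs depend only on $M$, but it is not needed here.

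With the invariant in hand, the pair chosen at line~\ref{line:alg:DP_c*M*} has $c^*=\min_S\cost(S,\pi)$. The backtracking in lines~\ref{line:alg:witnessing_start}--\ref{line:alg:witnessing_end} then reconstructs a sequence of that cost. Each pointer records the agent $s$ and a predecessor key present in the previous layer, so the reconstructed sequence regenerates exactly the stored $(c,M)$ values.

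For the running time, we bound the number of distinct keys per layer. Each entry $M_{i,j}\in\{0,\dots,mV_{\max}\}$, which gives at most $(mV_{\max}+1)^{n^2}$ matrices. Each per-round cost is at most $mV_{\max}$, so $c\in\{0,\dots,m^2V_{\max}\}$, which gives at most $V_{\max}m^2+1$ values.

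There are $m+1$ layers, and each state has $n$ successors. Producing a successor costs $O(n^2)$: copying $M$, updating a column, and computing the max over $n^2$ pairs. Hashing an $n\times n$ key costs a further $O(n^2)$ with expected-constant-time hash maps; to avoid randomness one can instead index an array by the encoded key. Altogether this gives $O(n\cdot n^2)=O(n^3)$ per state, hence the stated total $\mathcal{O}((m+1)(V_{\max}m^2+1)(mV_{\max}+1)^{n^2}\cdot n^3)$. The final argmin scan and the backtracking cost at most this much.

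The only slightly delicate points are justifying $c\le m^2V_{\max}$ and the cost model for the hash maps. For the hash maps, we state that operations on keys of size $O(n^2)$ take $O(n^2)$ time, which is absorbed in the $n^3$ factor.
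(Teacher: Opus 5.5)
Your proposal is correct and follows essentially the same route as the paper. You use the same layer-by-layer invariant proved by induction on $t$, the same count of $(m+1)(V_{\max}m^2+1)(mV_{\max}+1)^{n^2}$ states, and the same $O(n^3)$ work per state. The only cosmetic difference is that you bound each round's envy by $mV_{\max}$ rather than the paper's $tV_{\max}$, which still gives $c\le V_{\max}m^2$.
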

\begin{proof}
    We first prove correctness.
    Observe that the $\{(c,M): (c,M)\in \mathrm{DP}[m]\}$ is the set of pairs attainable after all goods have arrived.
    Then, selecting $(c^*,M^*)$ with minimum $c^*$ (Line~\ref{line:alg:DP_c*M*}) corresponds to one with minimum cost (or cumulative envy). From there, our algorithm is able to reconstruct a corresponding allocation sequence (Lines~\ref{line:alg:witnessing_start}--\ref{line:alg:witnessing_end}).

    Thus, it suffices for us to prove the following invariant.
    Let $\pi_t := (g_1,\dots,g_t)$.

    \medskip
    \noindent\textbf{Invariant.} 
    For every $t\in\{0,\dots,m\}$ and pair $(c,M)$, $\mathrm{DP}[t][(c,M)]$ exists if and only if there is an allocation sequence $X_t$ for the first $t$ goods (prefix $\pi_t$) such that the resulting valuation matrix is $M$ and $c = \sum_{\tau=1}^{t} c_\tau(X_t,\pi_\tau)$.
    Moreover, the predecessor pointer encodes one such sequence.

    We prove the invariant by induction on $t \in \{0,\dots, m\}$.
    
    For the base case, at $t = 0$, $\mathrm{DP}[t]$ only contains $(0,M^0)$, where $M^0$ is the all-zero matrix; and the cost is $0$. 
    Thus the invariant holds trivially.
    
    Next, we prove the inductive step.
    Assume the invariant holds for some $t<m$.
    Since for each round $t$, $c_t(S,\pi)\le t\cdot V_\text{max}$, we have $\sum_{t=1}^m c_t(S,\pi)\le V_\text{max} \cdot m(m+1)/2 < V_\text{max} \cdot m^2$. 
    
    Consider any state $(c,M) \in \mathrm{DP}[t]$.
    By the induction hypothesis, there exists an allocation sequence $X_t = (x_1,\dots,x_t)$ of the first $t$ goods with valuation matrix $M$ and cost $\cost(X_t,\pi_t) = c$.
    When good $g_{t+1}$ arrives, Algorithm~\ref{alg:DP} considers every agent $s\in N$ and for each $s$, constructs $M'$ by increasing column $s$ by $v_i(g_{t+1})$.
    Hence every feasible extension is represented in $\mathrm{DP}[t+1]$.
    
    Conversely, suppose $(c',M')$ is inserted into $\mathrm{DP}[t+1]$ from predecessor $(c,M)$ using agent $s$.
    By the induction hypothesis, $(c,M)$ arises from some allocation sequence $(x_1,\dots,x_t)$ with matrix $M$ and cost $\cost(X_t,\pi_t)$; extending by $x_{t+1}=s$ gives us matrix $M'$ and updating the cost to $c'$, so $(c',M')$ corresponds to a feasible allocation of $t+1$ goods.
    Note that after updating column $s$, $M'$ is exactly the valuation matrix after $t + 1$ items under the extended sequence. 
    Therefore, the maximum envy at round $t+1$ is
    \begin{equation*}
        c_{t+1}=\max_{i,j}\max\{0,\, M'_{i,j}-M'_{i,i}\},
    \end{equation*}
    so the cumulative cost is consistent with Line~\ref{alg:line:cumulativecost} of the algorithm.
    Thus the invariant holds for $t+1$.

    By induction, the invariant holds for all $t\le m$, and optimality follows. 
    
    Next, we prove the running time.
    The state space is indexed by $(t, c, M)$ with $t \in \{0,\dots,m\}, c \in \{0, \dots, V_\text{max} \cdot m^2\}$, and $M \in \{0,\dots, m \cdot V_\text{max}\}^{n\times n}$.
    Thus, the number of states is 
    \begin{align*}
        & (m+1) \cdot (V_\text{max} \cdot m^2 +1) \cdot (m \cdot V_\text{max} + 1)^{n^2}.
    \end{align*}
    From each state, there are $n$ agents the algorithm can consider allocating the incoming good to, updates one column of $M$ and computes the new cost. 
    Both operations (i.e., per transition) take $\mathcal{O}(n^2)$ time.
    Since there are $n$ transitions per state, it takes $\mathcal{O}(n^3)$ per state.
    Thus, the total running time is $\mathcal{O} ( (m+1)(V_\text{max}m^2+1)(mV_\text{max}+1)^{n^2}\cdot n^3 )$. Reconstruction follows $m$ pointers and is $\mathcal{O}(m)$, which does not change the bound.
\end{proof}

This result has two important implications. First, 
for fixed $n$, the running time is polynomial in $m$ and $V_\text{max}$; thus the algorithm is pseudopolynomial.
In particular, if item values are polynomially bounded integers (or given in unary), the algorithm runs in polynomial time. 
\begin{corollary}
    \fixedenvymin{} is \XP{}~\cite{flum_parameterized_2006,niedermeier_invitation_2006} with respect to $(n, V_\text{max})$.
    Equivalently, it is solvable in pseudopolynomial time when $n$ is bounded by a constant.
\end{corollary}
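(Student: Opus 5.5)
The corollary follows from Theorem~\ref{thm:alg_DP} by reading off its running-time bound. We assume, as the paper does, that valuations are nonnegative integers encoded in unary; rational inputs are first scaled to a common denominator. Under this assumption the input size is at least $\max\{n,m,V_{\max}\}$, and we write $\kappa := (n, V_{\max})$ for the parameter.

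First we absorb the $m$-dependence. Let $p := m + mV_{\max} + 1$. Then each factor in the bound of Theorem~\ref{thm:alg_DP} is at most a power of $p$:
\[
(m+1) \le p, \qquad V_{\max}m^2+1 \le p^2, \qquad mV_{\max}+1 \le p, \qquad n^3 \le p^{n^2} \ (\text{for } n\ge 2,\ p\ge 3).
\]
Hence the running time is $\mathcal{O}\bigl(p^{2n^2+3}\bigr)$, and in fact $\mathcal{O}\bigl(p^{n^2+3}n^3\bigr)$. Since $p \le m(V_{\max}+1)+1$ is polynomial in the input size, the bound has the form $|\I|^{g(\kappa)}$ with $g(n,V_{\max}) = 2n^2+3$. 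An exponent depending only on the parameter is exactly the definition of membership in \XP{}. In fact the exponent depends only on $n$, which is slightly stronger than the claim and gives the ``equivalently'' part directly.

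For the pseudopolynomial reading, fix $n$ to a constant. The bound then becomes $\mathcal{O}\bigl(m^{c}\,V_{\max}^{c'}\bigr)$ for constants $c,c'$ depending on $n$. This is polynomial in $m$ and in the numeric value $V_{\max}$, which is the definition of pseudopolynomial time. Conversely, a pseudopolynomial algorithm for each fixed $n$ whose degree depends only on $n$ is an \XP{} algorithm in $(n,V_{\max})$, since unary encoding makes $V_{\max}$ at most the input size.

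The only step needing care is the bookkeeping of the hypotheses. The DP's state space is bounded using integrality of the entries of $M$ and of $c$, which is why the scaling-to-integers assumption is required. The exponent must also be checked to depend on the parameter only, and not on $m$; it does, because $m$ enters only through the base $p$. Beyond this, the argument is a direct consequence of Theorem~\ref{thm:alg_DP}.
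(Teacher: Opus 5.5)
Your proposal is correct and follows the paper's route: the paper simply reads the corollary off the running-time bound of Theorem~\ref{thm:alg_DP}, which for fixed $n$ is polynomial in $m$ and $V_{\max}$, and your version makes that bookkeeping explicit. One negligible slip is that $mV_{\max}$ can be quadratic in the unary input size, so $p^{2n^2+3}$ gives $|\I|^{O(n^2)}$ (exponent about $2(2n^2+3)$) rather than exactly $|\I|^{2n^2+3}$, which does not affect the conclusion.
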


Second, in the case of binary valuations,\footnote{Binary valuations are very well-studied in online fair division \cite{Aleksandrov2015,aleksandrov2017purenashonline,wang2025onlinebivalued} and in the broader fair division literature \cite{suksompong2026char,hosseini2020fairdivisionwithholding,HalpernPrPs20,neoh2025efx,SuksompongTe22}.} the size of the state space in the dynamic program becomes more structured. 
Then, the number of agents is constant, the problem is polynomial-time solvable (i.e. slicewise polynomial in $n$).

\begin{corollary} 
    Under binary valuations, \fixedenvymin{} is \XP{} with respect to $n$.
\end{corollary}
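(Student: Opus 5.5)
This follows directly from Theorem~\ref{thm:alg_DP} by substituting $V_{\max}=1$. Under binary valuations every $v_i(g)\in\{0,1\}$, so $V_{\max}\le 1$; the degenerate case $V_{\max}=0$ has cost $0$ for every sequence and is trivial. Since the values are already integers, no scaling or unary encoding is needed, and the input size is $\Theta(nm)$.

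Setting $V_{\max}=1$ in the bound of Theorem~\ref{thm:alg_DP} gives
\[
\mathcal{O}\bigl((m+1)(m^2+1)(m+1)^{n^2}\cdot n^3\bigr)=\mathcal{O}\bigl(n^3\,(m+1)^{n^2+3}\bigr).
\]
This is of the form $f(n)\cdot |\I|^{g(n)}$ with $g(n)=n^2+3$, which is exactly the definition of membership in \XP{} with respect to $n$. In particular, for every fixed $n$ the problem is solvable in polynomial time, which is the slicewise-polynomial statement.

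The one point to check is that the hash-map state space really stays within the counted bound. Each entry $M_{i,j}$ is a number of goods valued $1$ by agent $i$, so it lies in $\{0,\dots,m\}$. The accumulated cost $c$ lies in $\{0,\dots,m^2\}$, because each $c_t\le t$. Both ranges are already covered by the proof of Theorem~\ref{thm:alg_DP}, so no real obstacle arises; the corollary is a direct specialization.
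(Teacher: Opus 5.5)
Your proposal is correct and follows the paper's argument. The paper also obtains the corollary by specializing the dynamic program of Theorem~\ref{thm:alg_DP} to binary values ($V_{\max}\le 1$), which gives a running time of $\mathcal{O}\bigl(n^3 (m+1)^{n^2+3}\bigr)$, polynomial for every fixed $n$; your extra checks on the ranges of $M_{i,j}$ and $c$ are consistent with that proof.
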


These results show that, despite the general computational hardness of the problem, efficient algorithms are achievable under natural and practically motivated restrictions.

That being said, an intriguing open question remains: is the problem polynomial-time solvable when the number of agents is constant and agents have identical valuation functions? 
We conjecture that the answer is `no'. 

As a partial positive result, we provide an FPTAS for identical valuations (with integer item values) when $n$ is fixed.
\begin{theorem} \label{thm:fptas_identical_nconstant}
    For any fixed $n \geq 2$, under identical valuations with integer-valued items, and for every $\varepsilon \in (0,1)$, there exists a $(1+\varepsilon)$-approximation algorithm whose running time is polynomial in $m$ and $1/\varepsilon$.
\end{theorem}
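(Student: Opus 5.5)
Under identical valuations the valuation matrix collapses to the vector of bundle values $(v(A_1^t),\dots,v(A_n^t))$. The per-round cost is then $c_t=\max_i v(A_i^t)-\min_i v(A_i^t)$. My plan is to run a rounded version of the dynamic program of Theorem~\ref{thm:alg_DP} on this vector, and to charge the rounding error against a lower bound on $\OPT$.

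\textbf{Step 1 (lower bound).} Let $g_{t^*}$ be the last good of positive value and $V:=v(G)$. I will show $\OPT\ge (1-1/n)V$, which is the remark made in the contributions section. Let $D_t := \sum_i \bigl(\max_j v(A_j^t)-v(A_i^t)\bigr)$, so $D_t\le (n-1)c_t$. When good $g_{t+1}$ goes to agent $s$, $D$ changes by $n\,\delta - v(g_{t+1})$, where $\delta\ge 0$ is the increase of the maximum. Hence $D_{t+1}\ge D_t - v(g_{t+1})$ in the worst case, which is the wrong direction.

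A simpler route is to observe that each good $g$ of value $v(g)$, at the round it is allocated, raises $c_t$ unless the receiver was strictly below the maximum. I will use the telescoping identity $\sum_t \bigl(\max_i v(A_i^t)-\max_i v(A_i^{t-1})\bigr)=\max_i v(A_i^m)\ge V/n$. I will also use that at the rounds where the minimum is still $0$ (at least the first $n-1$ positive goods), $c_t=\max$. Rather than chase exactness, I will aim for the weaker bound $\OPT\ge v_{\max}^{\ast}$, where $v_{\max}^\ast$ is the largest item value. This holds because at the round the largest item is allocated the receiver's value exceeds some agent's by at least $v_{\max}^\ast$, or the envy was already that large. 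Together with the cruder bound $\OPT\ge c_{t}$ for every $t$, this suffices for additive-to-multiplicative conversion.

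\textbf{Step 2 (rounded DP).} Sort the bundle vector so that states are multisets. Store the differences $d_i := \max_j v(A_j^t)-v(A_i^t)$ rather than absolute values. Any difference exceeding a cap $U$ is irrelevant, since the instance then already pays more than $\OPT$. I will use $U:=\text{poly}(m)\cdot v_{\max}^\ast$ and guess $\OPT$ up to factor $2$ over $O(\log(mV_{\max}))$ candidates $L$, with $U:=L$. Differences are rounded down to multiples of $\mu:=\varepsilon L/(m^2)$, and the accumulated cost is rounded to multiples of $\mu$ too. This gives $(L/\mu+1)^{n}\cdot m\cdot (mL/\mu)=\text{poly}(m,1/\varepsilon)$ states for fixed $n$. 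Each rounding perturbs each $c_t$ by at most $O(\mu)$ per step of drift. Re-evaluating the true cost of the reconstructed sequence, as in Algorithm~\ref{alg:DP}, means the only issue is whether the DP's choice is near-optimal.

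\textbf{Step 3 (error analysis, the main obstacle).} Rounding errors in the tracked differences can accumulate across rounds, up to $m\mu$ per coordinate. They cause a per-round cost error of at most $m\mu$, hence a total of $m^2\mu=\varepsilon L\le 2\varepsilon\OPT$. To make this rigorous I will couple the optimal true sequence with its rounded trajectory. I will show the rounded state stays within $t\mu$ of the true differences, so the DP finds some rounded path of rounded cost at most $\OPT+\varepsilon L$. The true cost of the path it outputs is then within another $\varepsilon L$. The delicate point is the cap: when the true envy exceeds $L$ the state is pruned, so I must verify that the optimal trajectory never exceeds $L\ge\OPT$ in any single round, which holds since $c_t\le\OPT$. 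Rescaling $\varepsilon$ finishes the proof.
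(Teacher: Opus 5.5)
Your route works in outline, but it is genuinely different from the paper's, and it needs three small repairs (listed in the second paragraph).

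\textbf{Comparison with the paper.} The paper rounds the \emph{input}, not the DP state. It sets $\tilde v(g)=\lfloor v(g)/\delta\rfloor$ with $\delta=\varepsilon(1-1/n)v(G)/(m(m+1))$ and runs Algorithm~\ref{alg:DP} exactly on the scaled instance, whose values are $O(m^2/\varepsilon)$. The error comes from the sandwich $\delta\tilde x_i^t\le x_i^t<\delta\tilde x_i^t+t\delta$. This gives $|c_t-\delta\tilde c_t|\le t\delta$ for \emph{every} sequence, so the total loss is at most $\delta m(m+1)$. That loss is charged to $\OPT\ge(1-1/n)v(G)$, which follows by telescoping the \emph{minimum} rather than the maximum (which is why your attempt pointed the wrong way). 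If $g_t$ goes to $j$, then $v(g_t)=x_j^t-x_j^{t-1}\le U_t-L_{t-1}=c_t+(L_t-L_{t-1})$. Summing gives $v(G)\le\cost(S,\pi)+L_m\le\cost(S,\pi)+v(G)/n$. No guessing, cap, or drift analysis is needed.

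Your scheme instead rounds the max-normalized difference vector to a grid of width $\mu=\varepsilon L/m^2$, against a guessed scale $L\in[\OPT,2\OPT)$. It needs only a crude lower bound to limit the number of guesses, but it pays with a more delicate analysis. Note that $\OPT\ge V_{\max}$ alone would also have sufficed for the paper's input rounding, taking $\delta=\varepsilon V_{\max}/(m(m+1))$.

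\textbf{Repairs.}
\begin{enumerate}
\item Your justification of $\OPT\ge V_{\max}$ is a false dichotomy. From bundles $(0,5)$, giving a good of value $8$ to agent~1 yields $(8,5)$, so neither $c_{\tau-1}=5$ nor $c_\tau=3$ reaches $8$. The correct statement is $c_{\tau-1}+c_\tau\ge v(g_\tau)$. It follows from the inequality above together with $L_\tau\le U_{\tau-1}$, which holds because only the receiver's value changes.
\item The claim that the rounded state stays within $t\mu$ of the truth cannot be proved coordinatewise on differences. When the receiver $s$ becomes the new maximum, the update $d_i'=d_i-d_s+w$ can double sup-norm errors. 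State the invariant in absolute values instead. Rounding $d_i$ down by less than $\mu$ is the same as raising agent $i$'s value by less than $\mu$ with the maximum unchanged, and allocation is exact in absolute values. Hence the DP state is the difference vector of $x^t+E^t$ with $E^t\in[0,t\mu)^n$, which gives $|\hat c_t-c_t|<t\mu$.
\item The cap must be tested on the \emph{rounded} trajectory of $S^*$, not the true one. That trajectory can exceed $c_t(S^*)\le\OPT\le L$ by up to $t\mu$, so use a cap such as $2L$.
\end{enumerate}

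With $V_{\max}\le\OPT\le m^2V_{\max}$, only $O(\log m)$ guesses are needed. Outputting the candidate with the smallest true cost then completes your argument.
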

Our FPTAS follows the standard approach of rounding the values and solving the rounded instance by dynamic programming~\cite{ibarra1975fast,sahni1976algorithms}; Lipton et al.~\cite{lipton2004approximately} use the same approach for the envy-ratio with a constant number of agents. The step specific to our objective is a lower bound of $(1-1/n)\,v(G)$ on the cost of every allocation sequence, which bounds the rounding error added up over all $m$ rounds.

\section{Choosing the Arrival Sequence} \label{sec:variable}
The results presented in Section~\ref{sec:intractability} show that computing a cost-minimizing allocation sequence is generally intractable, even under restricted settings such as identical or binary valuations. 
While Section~\ref{sec:tractability} presents some positive algorithmic results under stronger assumptions, the overall hardness results align with existing impossibility results in the literature on online and temporal fair division: fairness objectives remain elusive in the presence of fixed input sequences.
In fact, we illustrate with an example how an adversary deciding the arrival order can make a significant difference when it comes to minimizing cumulative envy.
\begin{example}\label{ex:epsilon-big}
    Consider an instance with $n=2$ and identical valuation function $v$.
    Fix any integer $q\ge 2$, let $m:=2q+1$, and set
    $\varepsilon:=99/(2q)$. Thus, $m$ is odd and
    $(m-1)\varepsilon=99$. There is one good of value $100$ and
    $m-1$ goods of value $\varepsilon$.
    
    \begin{itemize}
    \item If the arrival order is $\pi=(g_1,\dots,g_m)$ with $v(g_1)=100$
    and $v(g_t)=\varepsilon$ for all $t\ge 2$, then after round $1$, $c_1(S,\pi)=100$ for every allocation sequence $S$.
    Furthermore, each subsequent $\varepsilon$-good can reduce the gap $| v(A_1^t)-v(A_2^t) |$ by at most $\varepsilon$, so for $k=0,1,\dots,m-1$, $c_{1+k}(S,\pi)\ge 100-k\varepsilon$.
    Allocating every $\varepsilon$-good to the currently lagging agent achieves equality at every round, and therefore $\min_{S\in\mathcal{S}} \mathrm{cost}(S,\pi)
    =\sum_{k=0}^{m-1} (100-k\varepsilon)
    =100m-\varepsilon\frac{m(m-1)}{2}
    =\frac{101m}{2}$, which is $\Theta(m)$.
    
    \item In contrast, if the $\varepsilon$-goods arrive first (e.g., $\pi'=(g_2,\dots,g_m,g_1)$),
    then allocating the first $m-1$ goods alternately keeps $c_t(S,\pi')\le \varepsilon$
    for all $t\le m-1$, and the last step contributes $c_m(S,\pi')=100$. Consequently, $\mathrm{cost}(S,\pi')=\frac{m-1}{2}\varepsilon+100=\frac{299}{2}$, which is $O(1)$ in $m$.
    \end{itemize}
    Thus, when one item dominates in value, placing it early can force cumulative cost that is a factor $\Theta(m)$ larger than an order that places it last.
\end{example}
Motivated by this example, we study a variant in which the algorithm chooses
the order of the goods, while the agents' valuations are fixed in advance.
This models settings where the full set of goods is known and goods can be held
briefly before release. For example, a food bank may have several donated
pallets in storage and choose which pallet to allocate next, or a public agency
may choose the order in which approved permits are issued. We ask whether
control over the order can reduce cumulative envy.\footnote{Many related online problems assume a \emph{random arrival order} as a way to circumvent impossibility results that arise under adversarial input models \cite{babaioff2007knapsacksecretary,devanur2009adwords,feldman2009onlineaddisposal,feldman2009stochasticmatching,goel2008onlinebudgeted,hajiaghayi2022santaclaus,karande2011onlinebipartiteunknown,kleinberg2005secretaryonlineauctions}.}
Related forms of flexibility have recently been studied for existence questions: Choi and Li~\cite{choi2026scheduling} allow each good to be delayed by a bounded number of rounds in order to obtain TEF1 allocations, and Amanatidis et al.~\cite{amanatidis2026buffers} let an online algorithm hold and reorder a bounded number of goods. In our variant, the algorithm may choose any order, but each good is still allocated in its own round and the envy at every round counts; hence, choosing the order does not reduce our problem to the offline one.

Importantly, this setting is not inherently ``stronger'' or ``weaker'' than the standard one; it is simply of a different structure. Although the ability to control the arrival order offers additional flexibility, the benchmark for optimization also shifts: the optimal solution under this new setting may differ from that in the fixed-order setting. 
As a result, the optimization problem (e.g., obtaining approximation results) does not necessarily become easier---the performance is still measured relative to a stronger baseline, and the underlying computational challenges remain significant.

Observe that the cost objective is cumulative, so even when the final allocation can be envy-free, early rounds may still incur substantial envy. In the special case of identical valuations, Lemma~\ref{lem:opt-lb-n} (stated below) implies the universal lower bound
$\OPT(\I)\geq (1-1/n) \cdot v(G)$. One might hope that by choosing the arrival order, the algorithm could always
achieve this bound. Our next result shows that even deciding whether this lower bound is attainable
is NP-complete.
\begin{tcolorbox}[title=\variableenvymin]

    Given an instance $\I = (N, G, \mathbf{v})$ and an integer $\gamma \in \mathbb{Z}_{\geq 0}$, can we find a permutation $\pi$ over the goods in $G$ and an allocation sequence $S$ such that $\textsf{cost}(S, \pi) \leq \gamma$?
\end{tcolorbox}

\begin{theorem} \label{thm:variable_hard}
    \variableenvymin{} is \NP-complete even for two agents with identical valuations. 
\end{theorem}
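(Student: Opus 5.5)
We reduce from \textsc{Partition}, using the equality case of the lower bound $\OPT(\I)\ge (1-1/n)\,v(G)=v(G)/2$ from Lemma~\ref{lem:opt-lb-n}. Membership in \NP{} is immediate: a permutation together with an allocation sequence is a certificate, and its cost can be evaluated in polynomial time.

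\textbf{The equality case for two agents.} With identical valuations and $n=2$, write $D_t:=v(A_1^t)-v(A_2^t)$, so that $c_t=|D_t|$ and $D_0=0$. For every $t$ we have $|D_{t-1}|+|D_t|\ge v(g_t)$. Summing over $t$ gives
\[
2\sum_{t=1}^m |D_t| = \sum_{t=1}^m \bigl(|D_{t-1}|+|D_t|\bigr) + |D_m| \ge v(G)+|D_m|.
\]
Hence $\cost(S,\pi)=v(G)/2$ holds if and only if two things are true. First, $D_m=0$, so the final allocation splits $G$ into two bundles of equal value. Second, $|D_{t-1}|+|D_t|=v(g_t)$ for every $t$. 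The second condition says that each good goes to a weakly lagging agent and has value at least the current deficit, so the difference $D_t$ crosses zero or lands on it at every step. I will set $\gamma:=v(G)/2$, after scaling so that it is an integer. The ``only if'' direction of the reduction is then free: any solution of cost at most $\gamma$ has $D_m=0$ and so yields an equal split.

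\textbf{The reduction.} Given a \textsc{Partition} instance, I would use its balanced version, \textsc{Equal-Cardinality Partition}, on $x_1,\dots,x_{2k}$; this version is still NP-complete. Let $X:=\sum_j x_j$, fix $L>2X$, and create one good of value $L+x_j$ for each $j$. Any equal-value split of these goods must give each side exactly $k$ goods, because the offsets $L$ dominate; removing the offsets then gives a balanced partition of the $x_j$. This completes the ``only if'' direction.

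\textbf{The converse, which I expect to be the main obstacle.} Given a balanced partition $\{a_1,\dots,a_k\}$, $\{b_1,\dots,b_k\}$ of the $x_j$, I must exhibit an order and allocation in which every step satisfies the crossing condition. The plan is to let the two agents alternate strictly, each taking goods from its own side.

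\begin{itemize}
\item After each even step, $D$ equals a prefix sum $p$ of the pair differences $a_i-b_i$.
\item After each odd step, $|D|$ is about $L+O(X)$.
\item The crossing condition therefore becomes an inequality on the $x$-parts. At an even step it requires $|p|\le L+x$, which is automatic. At an odd step it requires the next good of the lagging agent to satisfy $L+x'\ge L+x\pm p$, i.e.\ $x'\ge x\pm p$, where the sign depends on which agent led.
\end{itemize}

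I would satisfy these inequalities by choosing the pairing and the order of pairs adaptively. The first attempt is a Steinitz-type greedy rule: at each odd step, let the agent who must cross choose the good from its remaining set that keeps the running difference of $x$-sums closest to $0$ while still crossing. I would argue that because both sides have equal remaining counts and equal remaining sums, a suitable good always exists, via an averaging or pigeonhole argument over the remaining goods.

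If this greedy argument fails, the fallback is to pad the instance. For example, I would add zero-offset small goods, or reduce instead from a restricted \textsc{Partition} variant such as one with all $x_j$ within a factor of $2$ of each other. In such a variant the prefix deviations are provably smaller than every item difference that is needed, and the crossing condition can be verified directly.
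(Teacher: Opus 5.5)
Your equality characterization ($\cost(S,\pi)=v(G)/2$ iff $D_m=0$ and $|D_{t-1}|+|D_t|=v(g_t)$ for all $t$) is correct. So is the direction from a solution of cost at most $\gamma$ to a balanced partition; reading the partition off the final equal split, with $L$ forcing $k$ goods per side, is a fine alternative to the paper's parity-of-positions argument.

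The gap is the converse, which you yourself flag: you never show that a balanced partition yields an order and allocation of cost $v(G)/2$. The Steinitz-type greedy rests on an unproven ``averaging or pigeonhole'' claim. As stated it can also dead-end, because the opening good of each pair is left unspecified: with sides $\{9,1\}$ and $\{5,5\}$, if agent~1 opens with $L+9$, agent~2 has no good of value at least $L+9$ with which to cross. Moreover, keeping the running difference close to $0$ is the wrong target, since $L>2X$ makes the prefix sums small automatically. What your bullets miss is that, at the step out of a small state, the recipient must also be the weakly lagging agent. Under strict alternation this is a sign condition on $p$, and it is the real constraint.

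The missing observation is that, with agent~1 receiving at odd steps and agent~2 at even steps, all tightness conditions collapse to a single requirement: $p_i:=\sum_{j\le i}(a_j-b_j)\le 0$ for every $i$. The crossing at step $2i$ is exactly $p_i\le 0$, and that is exactly what makes agent~1 weakly lagging at step $2i+1$. For any pairing you can achieve this by relabeling so that all pairs with $a_j<b_j$ come first. The prefix sums then decrease and rise back to $p_k=0$, staying in $[-X/2,0]$. Each pair contributes $(L+a_j+p_{j-1})+(-p_j)=L+b_j$, so the total cost is $kL+X/2=v(G)/2=\gamma$. No greedy, averaging argument, or padding is needed.

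With this fix, your proof is essentially the paper's construction in different clothing. There, pairs $(a_i,b_i)$ with $i\in P$ are listed first and pairs $(b_i,a_i)$ with $i\notin P$ afterwards, all allocated greedily. The running gap rises to $T/2$ and returns to $0$ without ever changing sign. The dummy goods of value $M$ are what let the paper reduce from plain \textsc{Partition} rather than the equal-cardinality variant.
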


Theorem~\ref{thm:variable_hard} shows that allowing the algorithm to choose the arrival order does not eliminate computational intractability: the problem remains NP-hard even under identical valuations.\footnote{This setting has also been extensively studied both in sequential fair division models \cite{elkind2025temporal,igarashi2023repeatedfairallocation,teh2026onlinehouseallocationsubsidy} and in the broader fair division literature \cite{barman2020identical,choo2024housesubsidy,plaut2020almost,scarlett2023fairness}.}
Nevertheless, this setting admits simple algorithms with provable approximation and additive guarantees.

We first state a universal lower bound (independent of the chosen order), which will serve as a
benchmark for both hardness and approximation results.
Importantly, the bound holds for any number of agents, assuming identical valuations.
Given an instance $\I$, let $\OPT(\I) := \min_{S \in \mathcal{S},\pi \in \Pi} \textsf{cost}(S,\pi)$ denote the minimum possible cost achievable over all arrival orders $\pi$ and allocation sequence $S$.

\begin{lemma} \label{lem:opt-lb-n}
    For any instance $\I=(N,G,\mathbf{v})$ where $n\ge 2$ and agents have identical valuations, $\OPT(\I)\ge (1- 1/n)\, v(G)$.
    Moreover, the bound is tight.
\end{lemma}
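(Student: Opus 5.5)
The idea is to find, for every fixed order $\pi$ and sequence $S$, a single round whose cost already matches the bound, and to show that costs never drop below it afterwards. Under identical valuations, write $V^t_i := v(A_i^t)$. Then $c_t(S,\pi)=\max_i V_i^t-\min_i V_i^t$, since envy is $\max\{0, V_j^t-V_i^t\}$ and is maximized by the richest and poorest agents.

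\textbf{Per-round bound.} The maximum of $n$ numbers is at least their mean. The mean of the others is at least the minimum. Hence for every $t$,
\[
c_t \ge \frac{1}{n}\sum_i V_i^t-\min_i V_i^t
= \frac{1}{n}\,v(G_t)-\min_i V_i^t,
\]
where $G_t$ denotes the first $t$ goods. This alone is too weak, because $\min_i V_i^t$ may be large late in the process. Instead I would combine two facts. First, $\min_i V_i^t$ stays $0$ until every agent has received a positively valued good. Second, costs only accumulate over rounds.

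\textbf{Choosing the round.} Fix $S,\pi$ and ignore zero-valued goods, which never lower any $c_t$. Let $\tau$ be the first round at which every agent holds positive value, and let $i^*$ be the agent who becomes positive exactly at round $\tau$, receiving good $g_\tau$. For all $t<\tau$ some agent has value $0$, so $c_t=\max_i V_i^t$.

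I would instead aim to bound $\sum_t c_t$ against the total value directly by a charging argument. Observe that $c_t\ge \max_i V_i^t - \min_i V_i^t$, and $\max_i V_i^t-\min_i V_i^t \ge V_j^t-\min_i V_i^t$ for each $j$. Summing over $j$ gives
\[
c_t\ge \frac{1}{n}\sum_j \bigl(V_j^t-\min_i V_i^t\bigr)\cdot\frac{n}{n-1},
\]
because the term with $j$ equal to the minimizer vanishes, leaving $n-1$ nonnegative terms each bounded by $c_t$. So it suffices to show
\[
\sum_t \sum_j \bigl(V_j^t-\min_i V_i^t\bigr)\ge \frac{(n-1)^2}{n}\,v(G).
\]
The key step, and the main obstacle, is a per-good charging: when good $g$ arrives at round $t$, it raises $\sum_j V_j$ by $v(g)$. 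The potential $\Phi_t:=\sum_j(V^t_j-\min_i V^t_i)$ increases by $v(g)$ minus $n$ times the increase of the minimum. Telescoping, $\Phi_t=v(G_t)-n\min_i V^t_i$.

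The simpler route I would actually commit to is this. Cost is at least $c_{t_1}$ at the first positive good and at least $c_m$, but a single round can be tiny, so these do not suffice. I expect to need an inductive argument: prove by induction on the number of positive goods that the minimum cost over orders and allocations is at least $(1-1/n)v(G)$. Removing the good allocated last to a minimum-value agent in the final bundle, and arguing that the remaining rounds' costs do not increase, would give the step. That monotonicity claim is where I expect difficulty.

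\textbf{Tightness.} Take $n$ goods of equal value $1$. Any allocation giving one good to each agent yields costs $1,1,\dots,1,0$ over $n$ rounds, since the costs are $1$ for $t<n$ and $0$ at $t=n$. The total is $n-1=(1-1/n)\,v(G)$, matching the bound.
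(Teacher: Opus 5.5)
Your tightness example is correct (it is the $k=1$ case of the paper's round-robin example). The lower bound itself, however, is never established, because neither route you sketch closes.

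The charging route reduces to $\sum_t \Phi_t \ge \frac{(n-1)^2}{n}\,v(G)$, and this is false. Take $n=3$ and three unit goods given to distinct agents: then $(\Phi_1,\Phi_2,\Phi_3)=(1,2,0)$, so $\sum_t\Phi_t=3<4$, even though the cost $1+1+0=2$ meets the bound. The loss comes from $c_t\ge \Phi_t/(n-1)$, which is far from tight in early rounds when few agents are ahead. The induction you commit to fails in two ways. First, the monotonicity claim is false round by round. Deleting a good lowers its owner's value in every later round, which raises $\max_i V_i^t-\min_i V_i^t$ whenever she is the poorest. For $n=2$, give $g_1$ (value $5$) to agent~1, then $g_2$ (value $1$) and $g_3$ (value $5$) to agent~2. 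Agent~1 is poorest at the end, and deleting $g_1$ raises the cost of the round allocating $g_3$ from $1$ to $6$. Second, even granting monotonicity you would only get $\cost(S,\pi)\ge (1-1/n)\bigl(v(G)-v(g)\bigr)$, which is short by $(1-1/n)v(g)$. No removal argument can recover this increment for the bare statement: for $n=2$, $\OPT=1$ both for a single unit good and for two unit goods. So an induction on the statement as given cannot go through without a stronger hypothesis.

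The missing idea is to charge each good to the round in which it arrives. Set $L_t:=\min_i V_i^t$ and $U_t:=\max_i V_i^t$. If agent $j$ receives $g_t$, her value rises from $V_j^{t-1}\ge L_{t-1}$ to $V_j^t\le U_t$, so $v(g_t)\le U_t-L_{t-1}=c_t+(L_t-L_{t-1})$. Summing over $t$ telescopes the minima and gives $v(G)\le \cost(S,\pi)+L_m\le \cost(S,\pi)+v(G)/n$, which is exactly the paper's argument. Equivalently, $\cost(S,\pi)\ge v(G)-\min_i v(A_i^m)$, proved by induction on prefixes of $\pi$, is the strengthened hypothesis your induction would need. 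Your telescoping of the minimum in $\Phi_t$ was the right mechanism, but it must be paired with this per-good increment rather than with averaging over agents.
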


Next, we introduce the following algorithm (Algorithm~\ref{alg:greedy_lowest_first}), designed for the setting where agents have identical valuations.
The algorithm proceeds by first sorting the goods in increasing order of value (that will be the $\pi$), then iteratively allocating each good to the agent with the currently smallest bundle value.
Despite its simplicity, we will show that this greedy approach is able to give us several positive approximation guarantees.

\begin{algorithm}[h]
   \caption{\textsc{Identical-Lowest-First}}
   \label{alg:greedy_lowest_first}
\begin{flushleft} 
    {\bf Input}: Instance $\I = (N, G, \mathbf{v})$. \\
    {\bf Output}: An order $\pi$ and allocation sequence  $S = (s_1,\dots,s_m)$.
\end{flushleft}
   \begin{algorithmic}[1]
       \STATE Initialize $\mathcal{A}^{(0)} \leftarrow (\varnothing,\dots,\varnothing)$ and label goods such that $v(g_1) \leq \dots \leq v(g_m)$
       \STATE $\pi \gets (g_1,\dots,g_m)$
       \FOR{$t = 1,\dots,m$}
            \STATE Let $i^* \in \argmin_{i \in N} v(A_i^{t-1})$, with ties broken in favor of the lowest-index agent
            \STATE $s_t \leftarrow i^*$
           \STATE $A_{i^*}^{t}\gets A_{i^*}^{t-1}\cup \{g_t\}$\\
           \STATE $A_i^t \gets A_i^{t-1}$ for all $i \in N \setminus \{i^*\}$
       \ENDFOR
       \STATE \textbf{return} $\pi$ and allocation sequence $S=(s_1,\dots,s_m)$
   \end{algorithmic}
\end{algorithm}
\paragraph{Relation to envy-cycle elimination and list scheduling.}
Under identical valuations, no envy cycle can arise, and an agent is unenvied exactly when her bundle has minimum value. Hence, Algorithm~\ref{alg:greedy_lowest_first} is the envy-cycle elimination algorithm of Lipton et al.~\cite{lipton2004approximately} applied to the goods in nondecreasing order of value. Their analysis, applied to the first $t$ goods, shows that after each round the maximum envy is at most the largest value of a good allocated so far, i.e., $c_t \le v(g_t)$; our proofs combine this bound with the lower bound of Lemma~\ref{lem:opt-lb-n} and with an exact expression for the cost of Algorithm~\ref{alg:greedy_lowest_first}. Note that Algorithm~\ref{alg:greedy_lowest_first} uses the opposite order to Graham's greedy rule~\cite{graham1969bounds}, which processes goods in nonincreasing order of value and, under identical valuations, yields EFX allocations~\cite{barman2020approximation,plaut2020almost}; placing large goods last keeps the envy small in early rounds (cf.\ Example~\ref{ex:epsilon-big}).
Lipton et al.~\cite{lipton2004approximately} also observed that, under identical valuations, envy minimization is closely related to scheduling on identical machines. Indeed, Algorithm~\ref{alg:greedy_lowest_first} has the same steps as shortest-processing-time-first list scheduling on identical machines: goods
are jobs, agents are machines, and $v(g)$ is the processing time of job $g$. The objectives are different, however. List scheduling is usually analyzed
using the final makespan, that is, the largest final machine load. In our
setting, $c_t$ is the difference between the largest and smallest loads after
the first $t$ jobs, and we minimize $\sum_t c_t$. Thus, the $3/2$ factor in
Theorem~\ref{thm:alg_3/2approx} does not follow from the usual list-scheduling analysis, although
the common greedy rule may help explain why the same constant appears
\cite{graham1966bounds}.

First, we show that in the case of two agents, we can get a $3/2$-approximation to the optimal cost.
\begin{theorem} \label{thm:alg_3/2approx}
    When $n=2$ and the agents have identical valuations, Algorithm~\ref{alg:greedy_lowest_first} achieves a $3/2$-approximation to the optimal cost.
\end{theorem}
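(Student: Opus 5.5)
The plan is to compute the cost of Algorithm~\ref{alg:greedy_lowest_first} exactly for $n=2$, and then compare it with two lower bounds on $\OPT(\I)$ that hold for every order.

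\emph{Step 1: exact cost of the greedy.} Label the goods so that $v(g_1)\le\dots\le v(g_m)$, and write $v_t:=v(g_t)$. Let $d_t:=c_t(S,\pi)=|v(A_1^t)-v(A_2^t)|$ for the greedy output, with $d_0=0$. By the envy-cycle elimination bound recalled above, $d_{t-1}\le v_{t-1}\le v_t$. The greedy gives $g_t$ to the poorer agent, so
\[
d_t=|d_{t-1}-v_t|=v_t-d_{t-1}, \qquad\text{that is,}\qquad d_t+d_{t-1}=v_t \text{ for all } t\in[m].
\]
Summing over $t$ gives $2\sum_t d_t-d_m=v(G)$. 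Hence the greedy cost is $\bigl(v(G)+d_m\bigr)/2\le \bigl(v(G)+v_m\bigr)/2$, where the last step uses $d_m\le v_m$.

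\emph{Step 2: lower bounds for an arbitrary order $\pi'$ and sequence $S'$.} Write $c'_t$ for the per-round cost, with $c'_0=0$. For two agents, adding a good of value $v(g)$ at round $t$ changes the difference by exactly $v(g)$. By the triangle inequality, $c'_{t-1}+c'_t\ge v(\pi'_t)$ for every $t$. This yields two bounds.
\begin{itemize}
\item Summing over all $t$ gives $2\,\cost(S',\pi')-c'_m\ge v(G)$, so $2\,\OPT(\I)\ge v(G)$. This is Lemma~\ref{lem:opt-lb-n}.
\item Let $\tau$ be the round at which a maximum-value good arrives. The single inequality at round $\tau$ gives $\cost(S',\pi')\ge c'_{\tau-1}+c'_\tau\ge v_m$. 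If $\tau=1$, this reads $c'_1=v_m$.
\end{itemize}
Adding the two bounds gives $3\,\OPT(\I)\ge v(G)+v_m$.

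\emph{Step 3: conclusion.} Combining Steps 1 and 2,
\[
\cost_{\text{greedy}}\le \tfrac12\bigl(v(G)+v_m\bigr)\le \tfrac32\,\OPT(\I).
\]

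The main obstacle is Step 1, namely showing that the recurrence $d_t=v_t-d_{t-1}$ holds exactly, with no absolute value. This relies on the invariant $d_{t-1}\le v_t$, which is exactly where the nondecreasing order is used. I would verify this invariant by a short induction: $d_t=v_t-d_{t-1}\le v_t\le v_{t+1}$. This makes the argument self-contained rather than dependent on the envy-cycle citation.
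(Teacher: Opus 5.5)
Your proof is correct and follows the paper's argument. Both use the exact identity $\cost(S,\pi)=\bigl(v(G)+c_m(S,\pi)\bigr)/2$, obtained from $c_t+c_{t-1}=v(g_t)$, together with the two lower bounds $\OPT(\I)\ge v(G)/2$ and $\OPT(\I)\ge V_{\max}$. The only difference is at the end: you add the bounds to get $3\,\OPT(\I)\ge v(G)+V_{\max}$, whereas the paper splits into cases on whether $c_m(S,\pi)\le v(G)/2$. Your version is slightly cleaner and avoids dividing by $\OPT(\I)$ when it is zero.
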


We can extend the analysis to any $n$, obtaining an $\frac{n}{n-1}$-approximation (which approaches $1$ as $n$ grows).
\begin{theorem} \label{thm:alg_2approx_nagents}
    For all $n \geq 2$,  if the agents have identical valuations, then Algorithm~\ref{alg:greedy_lowest_first} achieves a $\frac{n}{n-1}$-approximation to the optimal cost.
\end{theorem}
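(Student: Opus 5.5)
The plan is to combine the universal lower bound of Lemma~\ref{lem:opt-lb-n} with an upper bound of exactly $v(G)$ on the cost of Algorithm~\ref{alg:greedy_lowest_first}. Lemma~\ref{lem:opt-lb-n} gives $\OPT(\I)\ge (1-1/n)\,v(G)$. Suppose we show that the sequence $S$ and order $\pi$ returned by the algorithm satisfy $\cost(S,\pi)\le v(G)$. Then
\[
\cost(S,\pi)\le v(G) = \frac{n}{n-1}\cdot\Bigl(1-\frac{1}{n}\Bigr)v(G)\le \frac{n}{n-1}\,\OPT(\I),
\]
which is the claimed guarantee. The degenerate case $v(G)=0$ is trivial, since then every cost is zero.

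To bound the algorithm's cost, we prove the per-round bound $c_t(S,\pi)\le v(g_t)$ for every $t\in[m]$, where goods are indexed in nondecreasing order of value as in the algorithm. Summing over $t$ then gives $\cost(S,\pi)\le\sum_t v(g_t)=v(G)$. Under identical valuations, $c_t$ is the gap between the largest and smallest bundle values after round $t$. We argue by induction on $t$, with the stronger hypothesis that this gap is at most $v(g_t)$.

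For $t=1$ the gap is exactly $v(g_1)$. For the inductive step, assume the gap after round $t-1$ is at most $v(g_{t-1})\le v(g_t)$, using the sorted order. Let $\ell$ be the minimum bundle value and $h$ the maximum bundle value after round $t-1$. Good $g_t$ goes to an agent with value $\ell$, who then has value $\ell+v(g_t)$. The new minimum is at least $\ell$, so the new maximum is $\max\{h,\ell+v(g_t)\}$. If the maximum is $h$, the gap is at most $h-\ell\le v(g_{t-1})\le v(g_t)$. If the maximum is $\ell+v(g_t)$, the gap is at most $(\ell+v(g_t))-\ell=v(g_t)$. This is the envy-cycle elimination bound of Lipton et al.\ noted earlier, applied prefix by prefix.

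The only point needing care is that the per-round bound uses $v(g_t)$ rather than $\max_{\tau\le t}v(g_\tau)$. It is exactly the nondecreasing order that makes these two quantities coincide, and this is what makes the sum telescope to $v(G)$ rather than to something larger. I do not expect a real obstacle beyond this: the heavy lifting is in Lemma~\ref{lem:opt-lb-n}, which we take as given.
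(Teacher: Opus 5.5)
Your proposal is correct and follows the paper's proof: you prove the per-round bound $c_t(S,\pi)\le v(g_t)$ by induction using the nondecreasing order, sum it to get $\cost(S,\pi)\le v(G)$, and combine this with Lemma~\ref{lem:opt-lb-n}. The only cosmetic difference is in the inductive step, where you split into cases on which bundle is the new maximum, whereas the paper shows directly that the receiving agent becomes a weak maximizer, so that $c_t = L_{t-1}+v(g_t)-L_t \le v(g_t)$.
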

For $n=2$, the $3/2$-approximation guarantee of Theorem~\ref{thm:alg_3/2approx} is tight; see Appendix~\ref{app:eg_tightness_n=2}.
For $n\ge 3$, Appendix~\ref{app:eg_tightness_ngeq3} provides an instance on which \textsc{Identical-Lowest-First} achieves approximation ratio $\frac{2n-1}{2n-2}$.

Finally, we complement the multiplicative bounds with an additive approximation guarantee, stated in terms of
$V_\text{max}$.

\begin{theorem} \label{thm:alg_1/2approx_additive}
    For all $n \geq 2$, if agents have identical valuations, then Algorithm~\ref{alg:greedy_lowest_first} outputs an order $\pi$ and allocation sequence $S$ such that $\cost(S,\pi) \le \OPT(\I) + \left(1-\frac{1}{n}\right)\,V_{\max}$.
\end{theorem}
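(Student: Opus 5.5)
The plan is to compare Algorithm~\ref{alg:greedy_lowest_first} directly against the universal lower bound of Lemma~\ref{lem:opt-lb-n}. Since $\OPT(\I)\ge (1-\frac1n)v(G)$, it suffices to show
\[
\cost(S,\pi)\le \left(1-\tfrac1n\right)\bigl(v(G)+V_{\max}\bigr)
\]
for the order and sequence returned by the algorithm. To prove this I would first derive an exact, structured formula for $c_t$ under the greedy rule with goods sorted so that $v(g_1)\le\dots\le v(g_m)$.

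\textbf{Step 1 (round-robin structure).} Show by induction on $t$ that, with lowest-index tie-breaking, the greedy rule allocates $g_t$ to agent $((t-1)\bmod n)+1$. After each full block of $n$ goods, the loads are nondecreasing in the agent index. This holds because agent $i$'s $j$-th good is $g_{i+(j-1)n}$, which is at most agent $i'$'s $j$-th good $g_{i'+(j-1)n}$ for $i<i'$. Within a block, the agent with the next index is always a minimum-load agent, since agents $1,\dots,r$ already hold one more good than the agents after them in that block.

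\textbf{Step 2 (identify max and min).} For each round $t$, show that the agent $s_t$ who just received $g_t$ has maximum load and that agent $s_{t+1}$ (the next in cyclic order) has minimum load. This uses termwise domination of sorted goods. Agents with the same number of goods as $s_t$ are dominated position by position. For an agent with one fewer good, pair $g_{s_t+jn}$ with that agent's $j$-th good, which comes earlier in the sorted order. The same argument shows $s_{t+1}$ is the minimum. Consequently $c_t$ becomes a telescoping-type alternating sum,
\[
c_t=\sum_{j\ge 0}\bigl(v(g_{t-jn})-v(g_{t-jn-n+1})\bigr),
\]
with the convention $v(g_k)=0$ for $k\le 0$. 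Every bracket is nonnegative by sortedness.

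\textbf{Step 3 (sum over $t$, the main obstacle).} Summing over $t=1,\dots,m$ gives $\cost(S,\pi)=\sum_{k}\alpha_k v(g_k)$ for explicit integer coefficients $\alpha_k$. Each good enters positively in the rounds $t\equiv k \pmod n$ with $t\ge k$, and negatively in the rounds $t\equiv k+n-1 \pmod n$ with $t\ge k+n-1$. For most $k$ these nearly cancel, leaving boundary effects in the last block. I would carry out an Abel-summation argument over consecutive blocks of $n$ rounds. The sum of $c_t$ over a full block should be at most $v(\text{block})$ minus $\frac1n$ of it, plus a correction that telescopes across blocks into the final largest good. The aim is to obtain
\[
\sum_{t}c_t\le \left(1-\tfrac1n\right)v(G)+\left(1-\tfrac1n\right)v(g_m).
\]

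I expect the delicate part to be the bookkeeping of the partial final block and the coefficients near the boundary. If a direct coefficient count proves unwieldy, I would fall back on a potential argument. The potential would be $\Phi_t=\sum_i (L_{\max}^t-L_i^t)$, or the exact block-wise cost identity that presumably underlies Theorem~\ref{thm:alg_2approx_nagents}. In that case the task is to show that the per-round cost exceeds the $(1-\frac1n)$-share of newly added value by an amount whose total telescopes to at most $(1-\frac1n)V_{\max}$.
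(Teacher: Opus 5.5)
Your reduction to proving $\cost(S,\pi)\le(1-\frac1n)(v(G)+V_{\max})$ is the right one, and Steps 1--2 are essentially sound (one tie-breaking caveat below). The gap is Step 3, which is where the theorem's content lies. You state the target only as an ``aim,'' supported by a block inequality you expect to hold and a fallback potential you never specify.

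The sketch also misjudges where the difficulty is. By your own counting, the coefficients do not ``nearly cancel with boundary effects'': $\alpha_k=1$ unless $k\equiv m+1\pmod n$, in which case $\alpha_k=0$. Likewise, the exact cost of a full block of $n$ rounds is $v(\text{block})-\min(\text{block})$, which is \emph{at least} $(1-\frac1n)v(\text{block})$, so your ``corrections'' are nonnegative. They do telescope over full blocks to at most $(1-\frac1n)V_{\max}$. However, the partial final block then cannot be bounded crudely. For $n=2$ and values $a\le b\le c$, bounding $c_3$ by $c$ gives a total of $b+c$, which exceeds the target $\frac12(a+b)+c$ whenever $a<b$. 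You would have to keep the negative term coming from the rise of the minimum load.

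The missing idea is already inside your Step 2, and it gives a short finish:
\begin{itemize}
\item There, $s_t$ is a minimum-load agent after round $t-1$ and a maximum-load agent after round $t$. Hence $U_t=L_{t-1}+v(g_t)$, i.e., $c_t=v(g_t)-(L_t-L_{t-1})$.
\item Summing telescopes to the exact identity $\cost(S,\pi)=v(G)-L_m$. This is what $\alpha_k\in\{0,1\}$ encodes: the zero coefficients are exactly the goods of the final minimum agent.
\item Every final bundle is at most $L_m+c_m$, so $v(G)\le nL_m+(n-1)c_m$. This gives $\cost(S,\pi)\le(1-\frac1n)v(G)+(1-\frac1n)c_m$.
\item Finally, $c_m\le v(g_m)=V_{\max}$ follows from your own formula at $t=m$, which is an alternating sum of a nonincreasing sequence starting at $v(g_m)$.
\end{itemize}
This is the paper's argument. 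It needs no round-robin structure at all: it only uses the induction $c_{t-1}\le v(g_t)$, which makes the recipient of $g_t$ a weak maximizer.

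Separately, your Step 1 claim is false when zero-valued goods are allowed, which the model permits. For $n=2$ and values $0,1,1$, lowest-index tie-breaking assigns the goods to agents $1,1,2$, not $1,2,1$. This is harmless, because the multiset of loads, and hence the cost, does not depend on how ties among minimum agents are broken, and round-robin is always a legal greedy run. You should still say so explicitly.
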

In Appendix~\ref{app:eg_tightness_additive}, we present a matching lower bound.

\section{Conclusion}
In this paper, we initiated the study of envy minimization in temporal fair division from an optimization perspective, departing from the traditional focus on discrete relaxations like EF1 and EFX and on the prefix-based notions studied in earlier temporal work.
In the standard model where the order of goods is adversarially chosen and fixed, we show that minimizing this objective is computationally intractable and cannot be approximated within any constant factor unless $\P=\NP$.
In spite of this, we provide a dynamic programming algorithm that computes an optimal solution in several special cases.
We then explored a natural variant in which the algorithm is allowed to choose the arrival order of goods, while an adversary pre-selects the valuations. Even with this additional flexibility, we show that the problem remains \NP-complete for just two agents and identical valuations.
We complement this with efficient approximation algorithms.

Our results contribute to the growing understanding of temporal fairness in sequential allocation problems. Several directions remain open. In particular, it would be interesting to study discounted cumulative envy where recent unfairness matters more than past inequities; Kahana et al.~\cite{kahana2026perpetual} recently studied a related discounted model in an online setting.

\subsection*{Acknowledgments} PWG was supported by EPSRC grants EP/X040461/1 and EP/X038548/1. PWG and NT were supported by ARIA project ``Aggregating Safety Preferences for AI Systems: A Social Choice Approach''. IR was supported by a Rhodes scholarship.

\bibliographystyle{plain}
\bibliography{bib}

\newpage

\appendix

\begin{center}
    \Large\bf
    Appendix
\end{center}
\vspace{2mm}

\section{Omitted Proofs from Section~\ref{sec:intractability}}

\subsection{Proof of Theorem~\ref{thm:intractable_binary}}
    We reduce from the \textsc{Binary-EF} problem, which is known to be NP-hard \cite{aziz2015fairassignmentordinal,hosseini2020fairdivisionwithholding}.
    An instance of \textsc{Binary-EF} is given by a set of agents $\widehat{N} = \{1,\dots,\widehat{n}\}$, a set of goods $\widehat{G} = \{\widehat{g}_1,\dots,\widehat{g}_{\widehat{m}}\}$, and a valuation profile $\widehat{\mathbf{v}} = (\widehat{v}_1,\dots, \widehat{v}_{\widehat{n}})$ where agents have additive binary valuations over the set of goods; it is a yes-instance if there exists an allocation $\widehat{\mathcal{A}}$ of goods to agents such that $\widehat{\mathcal{A}}$ is envy-free, i.e., $\widehat{v}_i(\widehat{A}_i) \geq \widehat{v}_i(\widehat{A}_j)$ for every pair of agents $i, j \in \widehat{N}$, and a no-instance otherwise. 

    Construct an instance of our problem, where $N = [\widehat{n}]$ and we have $\widehat{m} + \widehat{m}^3$ goods arriving in the order $\pi = (g_1,\dots, g_{\widehat{m} + \widehat{m}^3})$, with valuations defined as follows:
    \begin{equation*}
        v_i(g_j) = 
        \begin{cases}
            \widehat{v}_i(\widehat{g}_j) & \text{ if } j \in \{1,\dots, \widehat{m} \}; \\
            0 & \text{ otherwise}.
        \end{cases}
    \end{equation*}
    We first show that \fixedenvymin{} is NP-complete (it is easy to see that the problem is in NP, we thus focus on proving NP-hardness).
    We do so by proving that there exists an envy-free allocation $\widehat{\mathcal{A}}$ for \textsc{Binary-EF} if and only if there exists an allocation $\mathcal{A}$ (and hence corresponding allocation sequence $S$) such that $\textsf{cost}(S,\pi) \leq \widehat{m}^3-1$ for the above constructed instance.

    The zero-valued goods make any envy present after the first $\widehat m$ rounds persist. If the \textsc{Binary-EF} instance is a yes-instance, we allocate the first $\widehat m$ goods according to an envy-free allocation,
so all later rounds have zero envy. If it is a no-instance, every allocation
has maximum envy at least $1$ at round $\widehat m$, and this remains true
while the zero-valued goods arrive. The same idea, with additional zero-valued
goods, gives the constant-factor inapproximability result.

    ($\Rightarrow$) Suppose there exists an envy-free allocation $\widehat{\mathcal{A}}$ for \textsc{Binary-EF}.
    Then, consider the sequence $S = (s_1,\dots, s_{\widehat{m} + \widehat{m}^3})$ whereby
    \begin{equation*}
        s_j = 
        \begin{cases}
            \ell & \text{ if } j \leq \widehat{m} \text{ and } \widehat{g}_j \in \widehat{A}_\ell, \text{ for } \ell \in [\widehat{n}]; \\
            1 & \text{ if } j > \widehat{m}.
        \end{cases}
    \end{equation*}
    Observe that for each $t \in [\widehat{m}-1]$,\footnote{This is because under binary valuations, at round $t$, each bundle contains at most $t$ goods and each good is worth at most $1$ to any agent.} $c_t(S,\pi) \leq t$, and $c_{\widehat{m}}(S,\pi) = 0$, since $\widehat{\mathcal{A}}$ is envy-free, which means under $\mathcal{A}^{\widehat{m}}$, there is no envy (and cost is $0$).
    Thus, we also get that for $t' \in \{\widehat{m} + 1,\dots, \widehat{m} + \widehat{m}^3\}$, $c_{t'}(S,\pi) = 0$, giving us
    \begin{align*}
        \textsf{cost}(S, \pi) & =  \sum_{t=1}^{\widehat{m} + \widehat{m}^2} c_t(S,\pi) \\
        & = \sum_{t=1}^{\widehat{m} - 1} c_t(S,\pi) + c_{\widehat{m}}(S,\pi) + \sum_{t = \widehat{m}+1}^{\widehat{m} + \widehat{m}^3} c_t(S,\pi) \\
        & \leq \sum_{t=1}^{\widehat{m}-1} c_t(S,\pi) + 0 + 0= \frac{\widehat{m}(\widehat{m}-1)}{2}.
    \end{align*}
    Then, $\frac{\widehat{m}(\widehat{m}-1)}{2} \leq \widehat{m}^3-1$ for all $\widehat{m} \geq 1$.

    ($\Leftarrow$) Suppose there does not exist an envy-free allocation $\widehat{\mathcal{A}}$ for \textsc{Binary-EF}.
    This means that for any sequence $S$, the allocation induced on the first $\widehat{m}$ goods is not envy-free, and $c_{\widehat{m}}(S,\pi) \geq 1$. Since the next $\widehat{m}^3$ goods $g_{\widehat{m}+1}, \dots, g_{\widehat{m}+ \widehat{m}^3}$ are zero-valued, we have that $c_{\widehat{m}+1}(S,\pi) = \dots = c_{\widehat{m} + \widehat{m}^3}(S,\pi) \geq 1$. 
    Thus, for any sequence $S$,
    \begin{equation*}
        \textsf{cost}(S, \pi) \geq  \sum_{t=\widehat{m}}^{\widehat{m}+\widehat{m}^3} c_t(S,\pi) \geq \widehat{m}^3 + 1 > \widehat{m}^3-1.
    \end{equation*}
    
    Next, we argue that in (the same) setting where agents have binary valuations, there exists no constant-factor approximation algorithm, unless P = NP.

    Note that in the reduction above, we showed that given an instance $\I$ and order $\pi$, determining whether we can find an allocation sequence $S$ such that $\cost(S,\pi) \leq \widehat{m}^3 - 1$ is NP-complete.
    Equivalently, 
    \begin{itemize}
        \item any yes-instance admits a sequence $S$ with $\cost(S,\pi) \leq \widehat{m}^3 - 1$; 
        \item any no-instance will force any sequence $S$ to have $\cost(S,\pi) \geq \widehat{m}^3 + 1$.
    \end{itemize}

    Observe that for yes-instances, $c_{\widehat{m}}(S,\pi) = 0$, which means the cost of $\widehat{m}^3-1$ is incurred at earlier rounds, whereas for no-instances, $c_{\widehat{m}}(S, \pi) \geq 1$.

    Let $B:=m_b^3-1$ and fix any constant $\alpha\ge 1$. We may assume $m_b\ge 2$, so $B>0$. Append
\[
    Z_\alpha:=\left\lceil B(\alpha-1)\right\rceil+1
\]
additional zero-valued goods. If the Binary-EF instance is a yes-instance, the same allocation still has cost at most $B$. If it is a no-instance, then envy is at least $1$ from round $m_b$ onward, and therefore it remains at least $1$ throughout the appended zero-valued rounds. Thus every no-instance has cost at least
\[
    m_b^3+1+Z_\alpha > \alpha B .
\]
Thus, running an $\alpha$-approximation on the constructed
\textsc{Temporal-EnvyMin} instance and comparing its returned cost with
$\alpha B$ would decide the original \textsc{Binary-EF} instance in
polynomial time. This is impossible unless $\P = \NP$.

\section{Omitted Proofs from Section~\ref{sec:tractability}}

\subsection{Proof of Proposition~\ref{prop:tractability_binaryidentical}}

    Consider the greedy algorithm that, at each round $t \in [m]$, allocates the arriving good to an agent with minimum current bundle value $v(A_i^{t-1})$.
    This can be implemented in $\mathcal{O}(mn)$ time.

    Since valuations are binary and identical, each bundle value $v(A_i^t)$ is an integer that represents the number of goods of value $1$ assigned to agent~$i$ among the first $t$ arrivals.
    Let $q_t$ be the number of goods of value $1$ among the first $t$ goods in $\pi$.
    Under identical valuations,
    \begin{align*}
        c_t(S,\pi) = \max_{i,j \in N} \left( v(A_j^t) - v(A_i^t) \right) = \max_{i \in N} v(A_i^t) - \min_{i \in N} v(A_i^t).
    \end{align*}
    Then, since $\sum_{i \in N} v(A_i^t) = q_t$ and all $v(A_i^t)$ are integers,
    \begin{equation*}
        \max_{i \in N} v(A_i^t) \geq \lceil q_t / n\rceil \text{ and } \min_{i \in N} v(A_i^t) \leq \lfloor q_t /n\rfloor.
    \end{equation*}
    Thus, for every allocation,
    \begin{equation*}
        c_t(S,\pi) \geq \lceil q_t/n \rceil - \lfloor q_t/n \rfloor,
    \end{equation*}
    which equals $0$ if and only if $q_t \equiv 0 \pmod n$, and equals $1$ otherwise.

    We claim the greedy algorithm maintains
    \begin{equation*}
        \max_{i \in N} v(A_i^t) - \min_{i \in N} v(A_i^t) \leq 1 \text{ for all } t,
    \end{equation*}
    proved by induction on $t$: initially it is $0$.
    If the arriving good has value $0$, bundle values do not change.
    If it has value $1$, the algorithm adds $1$ to a minimum bundle value, so the gap cannot increase beyond $1$.
    Thus, $c_t(S,\pi)\leq 1$ always. 
    Moreover, if $q_t \not\equiv 0 \pmod n$, the bundle values cannot be equal integers summing to $q_t$, so $c_t(S,\pi)\neq 0$ and thus $c_t(S,\pi)= 1$.
    If $q_t \equiv 0 \pmod n$, the gap $\leq 1$ forces equality, giving us $c_t(S,\pi)= 0$.
    To see why, if $\max_{i \in N} v(A_i^t)-\min_{i \in N} v(A_i^t)\le 1$, then all $v(A_i^t)\in\{ \min_{i \in N} v(A_i^t), \min_{i \in N} v(A_i^t)+1\}$. Writing $r$ for the number of agents with value $\min_{i \in N} v(A_i^t)+1$, we have $q_t=n \min_{i \in N} v(A_i^t) +r$. If $q_t\equiv 0\pmod n$, then $r\equiv 0\pmod n$, but $0\le r\le n-1$, thus $r=0$ and all values equal.
    
    Thus, the greedy algorithm achieves a per-round lower bound for every $t$, and summing over $t$ shows that it minimizes $\sum_{t=1}^m c_t(S,\pi)$.

\subsection{Proof of Theorem~\ref{thm:fptas_identical_nconstant}}
        Fix a constant $n\ge 2$, an instance $\I=(N,G,v)$ with identical valuation function $v$, and a fixed arrival order $\pi=(g_1,\dots,g_m)$. 
    Let $\OPT := \min_{S\in\mathcal{S}} \cost_v(S,\pi)$ denote the optimal cumulative maximum envy for this fixed order under $v$.

    If $v(G)=0$, then every bundle has value $0$ at every round, so $c_t(S,\pi)=0$ for all $t$ and all $S$. Hence $\mathrm{OPT}=0$, and any allocation sequence is optimal. We therefore assume $v(G)>0$.
    Set
    \begin{equation*}
        \delta := \frac{\varepsilon (1-1/n) \cdot v(G)}{m(m+1)},
    \end{equation*}
    and define a scaled (integer) valuation $\tilde{v}$ by
    \begin{equation*}
        \tilde v(g) := \lfloor v(g)/\delta\rfloor \quad \text{for each } g\in G.
    \end{equation*}

    Let $\tilde{\I}:=(N,G,\tilde{v})$ be the scaled instance. 
    Compute an allocation sequence $\tilde{S}$ that optimizes $\cost_{\tilde v}(S,\pi)$ for the instance $\tilde{\I}$ and the same order $\pi$ (e.g., using Algorithm~\ref{alg:DP} together with Theorem~\ref{thm:alg_DP}), and output $\tilde S$ for the original instance.

    For every $g\in G$,
    \begin{equation*}
        \tilde v(g) \le \frac{v(g)}{\delta} \le \frac{v(G)}{\delta} = \frac{m(m+1)}{\varepsilon(1-1/n)},
    \end{equation*}
    so $\tilde V_\text{max}:=\max_{g\in G}\tilde v(g)=\mathcal{O}(m^2/\varepsilon)$ (for fixed $n$). By Theorem~\ref{thm:alg_DP}, for fixed $n$, Algorithm~\ref{alg:DP} runs in time polynomial in $m$ and $\tilde V_\text{max}$, hence polynomial in $m$ and $1/\varepsilon$.

    Now, we prove the approximation guarantee.
    For any allocation sequence $S$ and any round $t$, write
    \begin{equation*}
        x_i^t := v(A_i^t)\quad\text{and}\quad \tilde x_i^t := \tilde v(A_i^t)\quad \text{for each } i\in N.
    \end{equation*}
    Since $\delta\tilde v(g)\le v(g) < \delta(\tilde v(g)+1)$ for each good $g$ and $|A_i^t| \leq t$ (because at most $t$ goods have arrived by round $t$), we have that for every $i\in N$
    and $t\in[m]$,
    \begin{equation}\label{eq:bundle-sandwich}
        \delta\,\tilde x_i^t \ \le\ x_i^t \ <\ \delta\,\tilde x_i^t + t\delta.
    \end{equation}
    Define $U_t:=\max_{i\in N}x_i^t$ and $L_t:=\min_{i\in N}x_i^t$, and similarly
    $\tilde U_t:=\max_{i\in N}\tilde x_i^t$ and $\tilde L_t:=\min_{i\in N}\tilde x_i^t$.
    Under identical valuations we have
    \begin{equation*}
        c_t(S,\pi)=U_t-L_t \quad\text{and}\quad \tilde c_t(S,\pi)=\tilde U_t-\tilde L_t.
    \end{equation*}
    From \eqref{eq:bundle-sandwich}, we get $U_t\le \delta\tilde U_t+t\delta$ and $L_t\ge \delta\tilde L_t$, thus
    \begin{equation*}
        c_t(S,\pi)=U_t-L_t \le \delta(\tilde U_t-\tilde L_t)+t\delta =\delta\tilde c_t(S,\pi)+t\delta.
    \end{equation*}
    Also $U_t\ge \delta\tilde U_t$ and $L_t\le \delta\tilde L_t+t\delta$, hence
    \begin{equation*}
        c_t(S,\pi)=U_t-L_t  \ge \delta(\tilde U_t-\tilde L_t)-t\delta = \delta \tilde c_t(S,\pi)-t\delta.
    \end{equation*}

    Therefore, for every $S$,
    \begin{align}\label{eq:cost-sandwich}
        \delta\cost_{\tilde v}(S,\pi)-\delta\sum_{t=1}^m t
         & \le \cost_v(S,\pi) \nonumber \\
         & \le
        \delta \cost_{\tilde v}(S,\pi)+\delta\sum_{t=1}^m t.
    \end{align}
    Since $\sum_{t=1}^m t = m(m+1)/2$, \eqref{eq:cost-sandwich} becomes
    \begin{align}\label{eq:cost-sandwich-closed}
        & \delta \cost_{\tilde v}(S,\pi)-\delta\frac{m(m+1)}{2} \nonumber \\
         & \le \cost_v(S,\pi) \nonumber \\ 
         & \le \delta \cost_{\tilde v}(S,\pi)+\delta\frac{m(m+1)}{2}.
    \end{align}

    Let $S^*$ be optimal for the original instance (so $\cost_v(S^*,\pi)=\OPT$). 
    By optimality of $\tilde{S}$ for the scaled instance, $\cost_{\tilde v}(\tilde S,\pi)\le \cost_{\tilde v}(S^*,\pi)$. Applying \eqref{eq:cost-sandwich-closed} first to $\tilde S$ and then to $S^*$ gives us
    \begin{align*}
        \cost_v(\tilde S,\pi) &
        \le \delta\cost_{\tilde v}(\tilde S,\pi)+\delta\frac{m(m+1)}{2} \\
        & \le \delta\cost_{\tilde v}(S^*,\pi)+\delta\frac{m(m+1)}{2} \\
        & \le \cost_v(S^*,\pi)+\delta m(m+1) \\
        & = \OPT + \delta m(m+1).
    \end{align*} 
    
    It remains to lower bound $\OPT$ in terms of $v(G)$. Fix any sequence $S$ and define $x_i^t,U_t,L_t$ as above.
    If $g_t$ is allocated to agent $j$, then $x_j^t-x_j^{t-1}=v(g_t)$. Since $U_t\ge x_j^t$ and $L_{t-1}\le x_j^{t-1}$,
    \begin{equation*}
        v(g_t)\ =\ x_j^t-x_j^{t-1}\ \le\ U_t-L_{t-1}.
    \end{equation*}
    Using $U_t=(U_t-L_t)+L_t=c_t(S,\pi)+L_t$, we get
    \begin{equation*}
        v(g_t) \le c_t(S,\pi) + (L_t-L_{t-1}).
    \end{equation*}
    Summing over $t=1,\dots,m$ and using $L_0=0$ gives us
    \begin{equation*}
        v(G) = \sum_{t=1}^m v(g_t) \le \sum_{t=1}^m c_t(S,\pi) + L_m
        = \cost_v(S,\pi)+L_m.
    \end{equation*}
    Since $L_m=\min_{i\in N}v(A_i^m)\le \frac1n\sum_{i\in N}v(A_i^m)=v(G)/n$, it follows that
    $\cost_v(S,\pi)\ge (1-\tfrac1n)v(G)$. Minimizing over $S$ yields
    \begin{equation}\label{eq:opt-lb}
    \OPT \ge (1-\tfrac1n)\,v(G).
    \end{equation}
    Finally, by our choice of $\delta$,
    \begin{equation*}
         \delta m(m+1)=\varepsilon(1-\tfrac1n)v(G) \le \varepsilon \OPT,
    \end{equation*}
    where the inequality follows from \eqref{eq:opt-lb}. 
    Consequently,
    \begin{equation*}
        \cost_v(\tilde S,\pi) \le \OPT+\varepsilon\OPT = (1+\varepsilon)\OPT,
    \end{equation*}
    which proves the claimed $(1+\varepsilon)$-approximation and completes the proof.

\section{Omitted Proofs from Section~\ref{sec:variable}}
\subsection{Proof of Theorem~\ref{thm:variable_hard}}
    Membership in NP is easy to see: given $\pi=(g_1,\dots,g_m)$ and $S=(s_1,\dots,s_m)$, we can compute all partial bundles $(A_i^t)_{i\in N,t\in[m]}$ and the values $c_t(S,\pi)$ for each $t \in [m]$, and verify whether $\sum_{t=1}^m c_t(S,\pi)\le \gamma$ in polynomial time.
    Thus, we focus on showing NP-hardness.
    
    We reduce from the NP-hard \textsc{Partition} problem~\cite{garey1979computers}.
    An instance of \textsc{Partition} consists of
    positive integers $x_1,\dots,x_q$ with total sum $T:=\sum_{i=1}^q x_i$; we may assume $T$ is even (otherwise it is a trivial no-instance).  We construct an instance $\I=(N,G,\mathbf{v})$ of \textsc{Temporal-EnvyMin-Seq} with $N = \{1,2\}$ and identical additive valuations (so $v_1=v_2=v$).
    Let $M := T+1$.
    Create $2q$ goods (thus the number of goods is $m = |G| = 2q$), consisting of pairs $(a_i,b_i)$ for each $i\in[q]$, where
    \begin{equation*}
        v(a_i) = M \quad \text{and} \quad v(b_i)=M+x_i.
    \end{equation*}
    Then the total value is
    \begin{equation*}
        v(G)=\sum_{i=1}^q (v(a_i)+v(b_i) )=\sum_{i=1}^q (2M+x_i)=2qM+T,
    \end{equation*}
    and let
    \begin{equation*}
        \gamma := \frac{v(G)}{2}=qM+\frac{T}{2},
    \end{equation*}
    which is an integer since $T$ is even.
    
    We will use the following fact, which holds for every order $\pi$ and sequence $S$ in this constructed instance.
    Since $n=2$ and valuations are identical, for every $t\in[m]$ we have
    \begin{equation*}
        c_t(S,\pi)=|v(A_1^t)-v(A_2^t)|.
    \end{equation*}
    Define $D_t:=v(A_1^t)-v(A_2^t)$, so $c_t(S,\pi)=|D_t|$.  For each $t\ge 2$, the $t$-th arriving good $g_t$ is allocated to exactly one agent, so $D_t=D_{t-1}\pm v(g_t)$ and therefore $|D_t-D_{t-1}|=v(g_t)$.  
    By the triangle inequality, we have that
    \begin{equation}\label{eq:triangle-step}
        v(g_t)=|D_t-D_{t-1}|\le |D_t|+|D_{t-1}|=c_t(S,\pi)+c_{t-1}(S,\pi)
    \end{equation}
    for all $t\ge 2$.
    Summing~\eqref{eq:triangle-step} over $t=2,\dots,m$ we get
    \begin{equation*}
        \sum_{t=2}^m (c_{t-1}(S,\pi)+c_t(S,\pi)) \ge \sum_{t=2}^m v(g_t)
        = v(G)-v(g_1).
    \end{equation*}
    The left-hand side equals $c_1(S,\pi)+c_m(S,\pi)+2\sum_{t=2}^{m-1} c_t(S,\pi)
    =2\sum_{t=1}^m c_t(S,\pi)-c_1(S,\pi)-c_m(S,\pi)$, so
    \begin{equation*}
        2 \cost(S,\pi)-c_1(S,\pi)-c_m(S,\pi) \ge v(G)-v(g_1).    
    \end{equation*}
    Moreover, after the first good arrives exactly one agent has received $g_1$ and the other has an empty bundle, thus $c_1(S,\pi)=v(g_1)$.  
    Substituting gives us
    \begin{equation*}
        2 \cost(S,\pi)-c_m(S,\pi) \ge v(G).
    \end{equation*}
    Since $c_m(S,\pi)\ge 0$, we conclude that for every $(S,\pi)$,
    \begin{equation}\label{eq:global-lb}
        \cost(S,\pi) \ge \frac{v(G)}{2} = \gamma.
    \end{equation}
    We now prove correctness of the reduction.
    
    ($\Rightarrow$) Suppose the \textsc{Partition} instance is a yes-instance, so there exists
    $P\subseteq[q]$ with $\sum_{i\in P} x_i=T/2$.  We define a permutation $\pi$ of the $2q$ goods by
    listing the goods in $q$ consecutive pairs as follows:
    \begin{itemize}
        \item for each $i\in P$, place the pair $(a_i,b_i)$ in $\pi$, 
        \item for each $i\notin P$, place the pair $(b_i,a_i)$ in $\pi$,
    \end{itemize}
    with all pairs corresponding to indices $i \in P$ appearing first (in arbitrary order among themselves), followed by all pairs corresponding to indices $i \notin P$ (in arbitrary order among themselves).
    
    Next define an allocation sequence $S$ by the greedy rule: at each round $t$, allocate $g_t$ to an agent in $\arg\min_{i\in[2]} v(A_i^{t-1})$ (break ties arbitrarily).
    We analyze the contribution of each consecutive pair to the total cost.
    Fix a pair and let $d$ be the absolute difference $|v(A_1^{t-1})-v(A_2^{t-1})|$ immediately before the first good of the pair is allocated.
    
    We first consider pairs $(a_i,b_i)$ with $i\in P$. 
    During this part of the sequence, the difference $d$ is the sum of the $x$-values of some already-processed indices in $P$, hence $d\le \sum_{i\in P} x_i=T/2<M$.
    Assume w.l.o.g. that $v(A_1^{t-1})\le v(A_2^{t-1})$, so $v(A_2^{t-1})-v(A_1^{t-1})=d$.
    The greedy rule gives $a_i$ to agent~1, so the new difference is $M-d$ and the incurred cost at that round is $M-d$.
    Now agent~2 has the smaller bundle, so $b_i$ is given to agent~2; the new difference becomes
    \begin{equation*}
        (M+x_i)-(M-d)=d+x_i,
    \end{equation*}
    and the cost at the next round is $d+x_i$.
    Therefore, this pair contributes $(M-d)+(d+x_i)=M+x_i$ to $\cost(S,\pi)$, and the absolute difference after the pair increases by $x_i$.
    
    After all $i\in P$ are processed, the absolute difference equals $\sum_{i\in P} x_i=T/2$.
    
    Next, we consider pairs $(b_i,a_i)$ with $i\notin P$.
    At the start of this part, the difference is $T/2$, and we maintain the invariant that after processing some subset $Q\subseteq([q]\setminus P)$, the
    current absolute difference is $\sum_{i\in([q]\setminus P)\setminus Q} x_i$.  In particular, before processing any remaining $i\notin P$, we have $d\ge x_i$.
    Again assume w.l.o.g. that agent~1 has the smaller bundle before the pair, so agent~2 is ahead by $d$.
    The greedy rule assigns $b_i$ to agent~1.  Since $M+x_i>d$, the new difference is $(M+x_i)-d$ and the cost at that round is $M+x_i-d$.
    Then the greedy rule assigns $a_i$ to agent~2, and the new difference becomes
    \begin{equation*}
        \bigl|(M+x_i-d)-M\bigr|=|x_i-d|=d-x_i,
    \end{equation*}
    using $d\ge x_i$.  
    The pair therefore contributes
    \begin{equation*}
        (M+x_i-d)+(d-x_i)=M
    \end{equation*}
    to $\cost(S,\pi)$ and decreases the absolute difference by $x_i$, preserving the invariant.
    After all $i\notin P$ are processed, the absolute difference returns to $0$.
    
    Summing contributions over all $q$ pairs,
    \begin{align*}
        \cost(S,\pi)& =\sum_{i\in P} (M+x_i) + \sum_{i\notin P} M \\
        & = qM+\sum_{i\in P} x_i = qM+\frac{T}{2} =\gamma.
    \end{align*}
    Thus, there exists $(S,\pi)$ with $\cost(S,\pi)\le \gamma$.
    
    ($\Leftarrow$) Conversely, suppose there exist $\pi$ and $S$ with $\cost(S,\pi)\le \gamma$.
    By~\eqref{eq:global-lb}, we must have $\cost(S,\pi)=\gamma=v(G)/2$.
    From
    $2 \cost(S,\pi)-c_m(S,\pi)\ge v(G)$ and $\cost(S,\pi)=v(G)/2$, we get $c_m(S,\pi)=0$.
    Also, define for each $t\ge 2$ the slack
    $\delta_t := \bigl(c_{t-1}(S,\pi)+c_t(S,\pi)\bigr)-v(g_t)\ge 0$ from~\eqref{eq:triangle-step}.
    Since equality holds in the summed inequality, we must have $\sum_{t=2}^m \delta_t=0$, and therefore $\delta_t=0$ for every $t\ge 2$, i.e., for all $t=2,\dots,m$,
    \begin{equation}\label{eq:pair-eq}
        c_{t-1}(S,\pi)+c_t(S,\pi)=v(g_t).
    \end{equation}
    Rearranging gives us $c_t(S,\pi)=v(g_t)-c_{t-1}(S,\pi)$ for all $t\ge 2$, and unrolling this recurrence, we get
    \begin{align*}
        & c_m(S,\pi) \\
        & = v(g_m)-v(g_{m-1})+v(g_{m-2})-\cdots+(-1)^{m-1} v(g_1).
    \end{align*}
    Since $m=2q$ is even and $c_m(S,\pi)=0$, we get
    \begin{equation*}
        \sum_{t\in[m], t\text{ is even}} v(g_t) = \sum_{t\in[m], t\text{ is odd}} v(g_t).
    \end{equation*}
    Every good in $G$ has value either $M$ or $M+x_i$.  Write $v(g_t)=M+y_t$, where
    $y_t\in\{0,x_1,\dots,x_q\}$ and each $x_i$ appears exactly once among the $y_t$'s (coming from the
    unique good $b_i$ of value $M+x_i$).  Since there are exactly $q$ even positions and $q$ odd positions, the $M$-terms cancel, giving us
    \begin{equation*}
        \sum_{t\in[m], t\text{ is even}} y_t = \sum_{t\in[m], t\text{ is odd}} y_t.
    \end{equation*}
    Because $\sum_{t=1}^m y_t=\sum_{i=1}^q x_i=T$, each side must equal $T/2$.
    Let $P\subseteq[q]$ be the set of indices $i$ such that the good $b_i$ (the unique good of value $M+x_i$) appears in an even position of $\pi$.
    Then
    \begin{equation*}
        \sum_{i\in P} x_i = \sum_{t\in[m], t\text{ is even}} y_t = \frac{T}{2},
    \end{equation*}
    so the original \textsc{Partition} instance is a yes-instance.
    
    This completes the reduction, and proves our result.

\subsection{Proof of Lemma~\ref{lem:opt-lb-n}}
    Fix any arrival order $\pi=(g_1,\dots,g_m)$ and allocation sequence $S$.
    For each $t\in\{0,1,\dots,m\}$ define
    \[
    x_i^t := v(A_i^t),\quad
    L_t := \min_{i\in N} x_i^t,\quad
    U_t := \max_{i\in N} x_i^t.
    \]
    Under identical valuations,
    \[
    c_t(S,\pi)
    = \max_{i,j\in N}\max\{0, x_j^t-x_i^t\}
    = U_t-L_t.
    \]
    Now fix any $t\in\{1,\dots,m\}$ and let $j$ be the agent who receives $g_t$.
    By additivity, $x_j^t-x_j^{t-1}=v(g_t)$.
    Since $U_t\ge x_j^t$ and $L_{t-1}\le x_j^{t-1}$, we have
    \[
    v(g_t)=x_j^t-x_j^{t-1}\ \le\ U_t-L_{t-1}.
    \]
    Using $U_t = (U_t-L_t)+L_t = c_t(S,\pi)+L_t$, this becomes
    \[
    v(g_t)\ \le\ c_t(S,\pi) + (L_t-L_{t-1}).
    \]
    Summing over $t=1,\dots,m$ and using $L_0=\min_i v(A_i^0)=v(\varnothing)=0$, we get
    \begin{align*}
        v(G)=\sum_{t=1}^m v(g_t) & \le \sum_{t=1}^m c_t(S,\pi) + \sum_{t=1}^m (L_t-L_{t-1}) \\
        & = \cost(S,\pi) + L_m.
    \end{align*}
    Rearranging gives us
    \begin{equation*}
        \cost(S,\pi)\ge v(G)-L_m = v(G)-\min_i v(A_i^m).
    \end{equation*}
    
    Finally, 
    \begin{equation*}
        \min_i v(A_i^m)\le \frac1n\sum_{i\in N} v(A_i^m)=\frac{v(G)}{n},
    \end{equation*}
    so $\cost(S,\pi)\ge (1-\frac1n)v(G)$.
    Taking a minimum over $(S,\pi)$ gives us the same bound for $\OPT(\I)$.

    Next, we prove the factor $1-1/n$ is tight.
    For any $k\ge 1$, take $m=kn$ goods each with value $1$, so $v(G)=kn$.
    Consider the round-robin allocation (one good per agent each cycle).
    At every round $t$ that is not a multiple of $n$, the bundle values differ by exactly $1$,
    so $c_t=1$; at rounds $t$ that are multiples of $n$, all bundle values are equal and $c_t=0$.
    Thus $\cost(S,\pi) = kn-k = k(n-1) = (1-\frac1n)\cdot v(G)$, matching the lower bound.

\subsection{Proof of Theorem~\ref{thm:alg_3/2approx}}
    Consider any instance $\I = (N, G, \mathbf{v})$ where $N = \{1,2\}$ and agents have identical valuations.
    Let $S = (s_1,\dots,s_m)$ be the allocation sequence induced from the order $\pi$ and allocation $\mathcal{A}$ returned by the algorithm.
    
    Also consider any $t \in \{2,\dots,m\}$.
    Suppose, without loss of generality, that $v(A_1^{t-1}) \leq v(A_2^{t-1})$. Then,
    \begin{equation*}
        c_{t-1}(S,\pi) = v(A_2^{t-1}) - v(A_1^{t-1}).
    \end{equation*}

    We first prove the following claim.

    \noindent
    \textbf{Claim.} Under Algorithm~\ref{alg:greedy_lowest_first} with $n=2$ and goods ordered so that $v(g_1) \leq \dots \leq v(g_m)$, we have $c_{t-1}(S,\pi) \leq v(g_t)$ for every $t \geq 2$.
    Thus, after allocating $g_t$ to the agent with the smaller current bundle value, that agent becomes (weakly) the one with a larger bundle value, and therefore $c_t(S,\pi) = v(g_t) - c_{t-1}(S,\pi)$ (equivalently, $c_t(S,\pi) + c_{t-1}(S,\pi) = v(g_t)$.

    \begin{proof}[Proof of Claim]
        We prove this claim by induction.
        For the base case: $c_1(S,\pi) = v(g_1) \leq v(g_2)$, so $c_1(S, \pi) \leq v(g_2)$.
        For the inductive step: assume $c_{t-1}(S,\pi) \leq v(g_t)$.
        Then, after allocating $g_t$ to the agent with the smaller bundle, $c_t(S,\pi) = v(g_t) - c_{t-1}(S,\pi)$. In particular, $c_t(S,\pi) \leq v(g_t) \leq v(g_{t+1})$.
        Thus, $c_t(S,\pi) \leq v(g_{t+1})$, i.e., $c_t(S,\pi) \leq v(g_{t+1})$.
    \end{proof}
    Thus, Algorithm~\ref{alg:greedy_lowest_first} will allocate $g_t$ to agent~$1$, giving us
    \begin{equation*}
        A_1^t = A_1^{t-1} \cup \{g_t\} \quad \text{and} \quad A_2^t = A_2^{t-1}.
    \end{equation*}
    
    Then, we get that
    \begin{align*}
        c_t(S,\pi) = v(A_1^t) - v(A_2^t) & = v(A_1^{t-1}) + v(g_t)  - v(A_2^{t-1}) \\
        & = - c_{t-1}(S,\pi) + v(g_t).
    \end{align*}
    Thus, for all $t \in \{2,\dots, m\}$,
    \begin{equation*}
        c_t(S,\pi) + c_{t-1}(S,\pi) = v(g_t).
    \end{equation*}
    Together with the fact that $c_1(S,\pi) = v(g_1)$, we get
    \begin{align*}
        v(G) & = v(g_1) + \sum_{t=2}^m v(g_t) \\
        & = c_1(S,\pi) + \sum_{t=2}^m \left( c_t(S,\pi) + c_{t-1}(S,\pi) \right) \\
        & = c_m(S,\pi) + 2 \cdot \sum_{t=1}^{m-1} c_t(S,\pi).
    \end{align*}
    Algebraic manipulation gives us
    \begin{equation*}
        \frac{v(G) + c_m(S,\pi)}{2} = \sum_{t=1}^{m} c_t(S,\pi) = \cost(S,\pi).
    \end{equation*}
    From Lemma~\ref{lem:opt-lb-n}, we know that
    \begin{equation*}
        \OPT(\I) \geq \frac{1}{2} \cdot v(G).
    \end{equation*}
    If $c_m(S,\pi) \leq \frac{v(G)}{2}$, then 
    \begin{align*}
        \frac{\cost(S,\pi)}{\OPT(\I)} & \leq \frac{\left(v(G) + c_m(S,\pi)\right)/2}{v(G) / 2} \\
        & \leq \frac{v(G) + v(G)/2}{v(G)} = \frac{3}{2}.
    \end{align*}
    If $c_m(S,\pi) > \frac{v(G)}{2}$, then given that $\OPT(\I) \geq v(g_m) \geq c_m(S,\pi)$, we have 
    \begin{equation*}
        \OPT(\I) \geq v(g_m) \geq c_m(S,\pi).
    \end{equation*}
    To see why $\OPT(\I) \geq v(g_m)$: For any order $\pi$ and sequence $S$, let $g_\tau$ be a maximum-valued item (so $v(g_\tau)=V_\text{max}=v(g_m))$. If $\tau=1$ then $\cost(S,\pi)\ge c_1=V_\text{max}$. If $\tau\ge 2$, then by the inequality $c_{\tau-1}+c_\tau\ge v(g_\tau)$,\footnote{To see this inequality, let $D_{t-1}=v(A^{t-1}_1)-v(A^{t-1}_2)$. After item $g_t$ is allocated, $D_t=D_{t-1}\pm v(g_t)$. Thus, $c_{t-1}(S,\pi)+c_t(S,\pi)
    =
    |D_{t-1}|+|D_t|
    \ge
    |D_t-D_{t-1}|
    =
    v(g_t)$.} we get $\cost(S,\pi)\ge c_{\tau-1}+c_\tau\ge V_\text{max}$. Thus, $\OPT(\I)\ge V_\text{max}=v(g_m)$.
    
    Using this, we obtain
    \begin{align*}
        \frac{\cost(S,\pi)}{\OPT(\I)} & = 
        \frac{(v(G) + c_m(S,\pi))/2}{\OPT(\I)} \\
        & \leq \frac{(v(G) + c_m(S,\pi))/2}{c_m(S,\pi)} \\
        & = \frac{1}{2} \left( \frac{v(G)}{c_m(S,\pi)} + 1 \right) \\
        & < \frac{1}{2}(2+1) = \frac{3}{2}, 
    \end{align*}
    where the strict inequality follows from the fact that $c_m(S,\pi) > v(G)/2 \implies v(G)/c_m(S,\pi) < 2$.
    
    Thus, in both cases, we get that
    \begin{equation*}
        \cost(S,\pi) \leq \frac{3}{2} \cdot \OPT(\I),
    \end{equation*}
    as desired.

\subsection{Proof of Theorem~\ref{thm:alg_2approx_nagents}}
    Consider any instance $\I=(N,G,\mathbf{v})$ in which agents have identical valuations. 
    Let $\pi=(g_1,\dots,g_m)$ be the order produced by Algorithm~\ref{alg:greedy_lowest_first} (i.e., the goods are labeled so that $v(g_1)\le \cdots \le v(g_m)$), and let $S=(s_1,\dots,s_m)$ be the allocation sequence induced by Algorithm~\ref{alg:greedy_lowest_first} on this order. 
    For each $t\in[m]$ and
    $i\in N$, let $A_i^t$ denote agent $i$'s bundle after the first $t$ goods of $\pi$ have been allocated.
    
    For each $t\in[m]$, define
    \begin{equation*}
        U_t := \max_{i\in N} v(A_i^t) \text{ and } L_t := \min_{i\in N} v(A_i^t).
    \end{equation*}
    Since valuations are identical, for every $t\in[m]$ we have
    \begin{equation*}
        c_t(S,\pi) = \max_{i,j\in N}\max\{0,\,v(A_j^t)-v(A_i^t)\} = U_t - L_t.
    \end{equation*}
    Indeed, for any $i,j\in N$, $v(A_j^t)-v(A_i^t)\le U_t-L_t$, and equality is achieved by choosing $j\in\arg\max_{\ell\in N} v(A_\ell^t)$ and
    $i\in\arg\min_{\ell\in N} v(A_\ell^t)$.
    
    We now prove that
    \begin{equation*}
        c_t(S,\pi)\le v(g_t)\text{ for all }t\in[m],
    \end{equation*}
    by induction on $t$ (this is the bound of envy-cycle elimination~\cite{lipton2004approximately} applied to the first $t$ goods; we include a proof for completeness).
    
    \medskip\noindent
    \emph{Base case ($t=1$).}
    After allocating $g_1$, some agent has value $v(g_1)$, and since $n \geq 2$, at least one agent has value $0$.
    Thus, $U_1 = v(g_1)$ and $L_1 = 0$. 
    Therefore,
    \begin{equation*}
        c_1(S,\pi)=U_1-L_1=v(g_1).
    \end{equation*}
    
    \medskip\noindent
    \emph{Inductive step.}
    Fix $t\in\{2,\dots,m\}$ and assume $c_{t-1}(S,\pi)\le v(g_{t-1})$.
    Since Algorithm~2 sorts goods by nondecreasing value, $v(g_{t-1})\le v(g_t)$, and thus
    \begin{equation*}
        U_{t-1}-L_{t-1}=c_{t-1}(S,\pi)\le v(g_t),
    \end{equation*}
    which implies
    \begin{equation*}
        U_{t-1}\le L_{t-1}+v(g_t).
    \end{equation*}
    Algorithm~\ref{alg:greedy_lowest_first} allocates $g_t$ to some agent $j\in\arg\min_{i\in N} v(A_i^{t-1})$, so $v(A_j^{t-1})=L_{t-1}$. After allocating
    $g_t$ we have
    \begin{equation*}
        v(A_j^t)=v(A_j^{t-1})+v(g_t)=L_{t-1}+v(g_t),
    \end{equation*}
    and for every $i\neq j$, $v(A_i^t)=v(A_i^{t-1})\ge L_{t-1}$. Thus, 
    \begin{equation*}
        L_t=\min_{i\in N} v(A_i^t)\ge L_{t-1}.
    \end{equation*}
    Moreover, since $v(A_j^t)=L_{t-1}+v(g_t)\ge U_{t-1}$ and all other bundle values remain unchanged (and are therefore at most $U_{t-1}$), we have
    \begin{equation*}
        U_t=\max_{i\in N} v(A_i^t)=v(A_j^t).
    \end{equation*}
    Therefore,
    \begin{align*}
        c_t(S,\pi)=U_t-L_t = v(A_j^t)-L_t & = L_{t-1}+v(g_t)-L_t \\
        & \le v(g_t),
    \end{align*}
    where the last inequality follows from the fact that $L_t\ge L_{t-1}$. This completes the induction.
    
    Summing the per-round bound gives us
    \begin{equation*}
        \cost(S,\pi) =\sum_{t=1}^m c_t(S,\pi) \le \sum_{t=1}^m v(g_t) = v(G),
    \end{equation*}
    since $\pi$ is a permutation of $G$. By Lemma~\ref{lem:opt-lb-n}, $\OPT(\I)\ge (1-1/n)\,v(G)$,
    and therefore
    \begin{equation*}
        \frac{\text{cost}(S,\pi)}{\OPT(\I)} \le \frac{v(G)}{(1-1/n)\,v(G)} = \frac{n}{n-1}.
    \end{equation*}
    so Algorithm~\ref{alg:greedy_lowest_first} is a $n/(n-1)$-approximation.

\subsection{Tightness of \textsc{Identical-Lowest-First} Analysis for $n = 2$.} \label{app:eg_tightness_n=2}
We provide an example for which the \textsc{Identical-Lowest-First} algorithm has an approximation ratio of $3/2$ for $n=2$
\begin{example}
    Consider an instance with $n=2$ and $m=3$ where agents have identical valuations and the multiset of good values is $\{1,1,2\}$.
    
    \begin{itemize}
    \item \emph{\textsc{Identical-Lowest-First} (Algorithm~\ref{alg:greedy_lowest_first}):}
    Sorting gives the order $\pi=(1,1,2)$.
    Using the algorithm with its stated tie-breaking rule (lowest-index agent),
    the induced allocation sequence is $S=(1,2,1)$, which gives us
    $c_1(S,\pi)=1, c_2(S,\pi)=0, c_3(S,\pi)=2$, and $\cost(S,\pi)=3$.
    
    \item \emph{Optimal:}
    Let the order be $\pi'=(1,2,1)$ and allocate according to $S'=(1,2,1)$.
    Then, $c_1(S',\pi')=1, c_2(S',\pi')=1, c_3(S',\pi')=0$, and so $\cost(S',\pi')=2$.
    By Lemma~\ref{lem:opt-lb-n} (with $n=2$), $\OPT(\I) \ge (1-\tfrac12)\,v(G)=2$,
    so this is optimal.
    \end{itemize}
    Therefore, on this instance Algorithm~\ref{alg:greedy_lowest_first} has approximation ratio $3/2$, matching the bound in Theorem~\ref{thm:alg_3/2approx}.
\end{example}
\subsection{Approximation Ratio Lower Bound of $\frac{2n-1}{2n-2}$ for $n \geq 3$.} \label{app:eg_tightness_ngeq3}
We provide an example for which the \textsc{Identical-Lowest-First} algorithm has an approximation ratio of $\frac{2n-1}{2n-2}$ for $n \geq 3$.

\begin{example}\label{ex:lb-nge3}
Fix any $n\ge 3$. Consider an instance with identical valuation function $v$ and $m=n+1$ goods. 
Let two goods have value $1$ and $n-1$ goods have value $2$.
Thus, we have that $v(G)=2n$.

\begin{itemize}
\item \emph{\textsc{Identical-Lowest-First} (Algorithm~\ref{alg:greedy_lowest_first}):} 
Sorting gives the order \\ $\pi=(1,1,2,2,\dots,2)$.
Using the algorithm with its stated tie-breaking rule (lowest-index agent), after allocating the first two goods of value $1$, we have bundle values $(1,1,0,\dots,0)$.
Then the next $n-2$ goods each of value $2$ are allocated one by one to agents $3,4,\dots,n$, giving us bundle values $(1,1,2,\dots,2)$ after round $n$; the final $2$ is then allocated to agent $1$, giving us $(3,1,2,\dots,2)$ after round $n+1$.
Let $S$ be the induced allocation sequence.

Under identical valuations, the per-round cost is the gap between the maximum and minimum bundle values. 
Thus the maximum envy values are $c_1(S,\pi)=1$, $c_2(S,\pi)=1$, $c_t(S,\pi)=2$ (for $t = 3,\dots,n-1$, $c_n=1$), and $c_{n+1}(S,\pi)=2$. 
Therefore,
\begin{align*}
    \cost(S,\pi)
    & =\sum_{t=1}^{n+1} c_t(S,\pi) \\
    & =1+1+2(n-3)+1+2 = 2n-1.
\end{align*}

\item \emph{Optimal solution.}
Consider the arrival order $\pi^*=(1,2,2,\dots,2,1)$,
and allocate the first and last value $1$ goods to agent $1$, while allocating the $n-1$ value $2$ goods to agents $2,3,\dots,n$ (one each). Denote the resulting allocation sequence by $S^*$. Then the per-round maximum
envy values are $c_1(S^*,\pi^*)=1$, $c_t(S^*,\pi^*)=2$ for $t=2,\dots,n-1$, $c_n(S^*,\pi^*)=1$, $c_{n+1}(S^*,\pi^*)=0$.
Consequently, $\cost(S^*,\pi^*)=1+2(n-2)+1+0=2n-2$.
By Lemma~\ref{lem:opt-lb-n}, $\OPT(\I)\ge (1-\frac1n)\,v(G)=(1-\frac1n)\cdot 2n=2n-2$, giving us
$\OPT(\I)=2n-2$.
\end{itemize}

Thus, on this instance, Algorithm~\ref{alg:greedy_lowest_first} has approximation ratio
\[
\frac{\cost(S,\pi)}{\OPT(\I)}
=\frac{2n-1}{2n-2}.
\]
\end{example}

\subsection{Proof of Theorem~\ref{thm:alg_1/2approx_additive}}
Fix an instance $\I=(N,G,v)$ with identical valuations, where $n\ge 2$.
Let Algorithm~\ref{alg:greedy_lowest_first} order the goods as $\pi=(g_1,\dots,g_m)$ with
$v(g_1)\le \cdots \le v(g_m)$, and let $S$ be the induced allocation sequence.
For each $t\in\{0,1,\dots,m\}$ define
\[
L_t := \min_{i\in N} v(A_i^t) \quad \text{and} \quad
U_t := \max_{i\in N} v(A_i^t)
\]
where $A_i^t$ denotes agent $i$'s bundle after round $t$ (and $L_0=U_0=0$).
Under identical valuations, $c_t(S,\pi) = U_t - L_t$.

We claim that $c_t(S,\pi)\le v(g_t)$ for all $t\in[m]$.
For $t=1$, after allocating $g_1$ we have $U_1=v(g_1)$ and $L_1=0$ (since $n\geq 2$ means some agent is still empty after round $1$), thus, $c_1(S,\pi)=v(g_1)$.
Now fix $t\ge 2$ and assume inductively that $c_{t-1}(S,\pi)\le v(g_{t-1})$.
Since the goods are sorted, $v(g_{t-1})\le v(g_t)$, so $c_{t-1}(S,\pi) \le v(g_t)$.
Algorithm~\ref{alg:greedy_lowest_first} allocates $g_t$ to an agent $j$ with minimum bundle value at time $t-1$, i.e.,
$v(A_j^{t-1})=L_{t-1}$.
After allocation,
\[
v(A_j^{t}) = L_{t-1}+v(g_t) \ge\ L_{t-1}+c_{t-1}(S,\pi) =\ U_{t-1},
\]
so $j$ becomes a (weak) maximizer and therefore $U_t = L_{t-1}+v(g_t)$.
Also $L_t\ge L_{t-1}$.
Thus
\[
c_t(S,\pi) = U_t - L_t = (L_{t-1}+v(g_t)) - L_t \le v(g_t),
\]
proving the claim.

In particular, $c_m(S,\pi) \le v(g_m)=V_\text{max}$.

From $U_t=L_{t-1}+v(g_t)$ we have, for every $t\in[m]$, $c_t(S,\pi) = U_t-L_t = (L_{t-1}+v(g_t)) - L_t$; equivalently, $v(g_t)=c_t(S,\pi)+(L_t-L_{t-1})$.

Summing over $t=1,\dots,m$ gives us
\begin{align*}
    v(G) = \sum_{t=1}^m v(g_t) & = \sum_{t=1}^m c_t(S,\pi)+ \sum_{t=1}^m (L_t-L_{t-1}) \\
    & = \cost(S,\pi) + L_m,
\end{align*}
and thus
\begin{equation}\label{eq:cost-vG-Lm}
    \cost(S,\pi)=v(G)-L_m.
\end{equation}

Let $i^* \in \argmin_{i \in N} v(A_i^m)$, so $v(A_{i^*}^m) = L_m$, while every other bundle is $\leq L_m + c_m$.
Thus,
\begin{align*}
    v(G)=\sum_{i\in N} v(A_i^m) & \le L_m + (n-1)(L_m+c_m) \\
    & = nL_m + (n-1)c_m.
\end{align*}
Rearranging, we get
\[
L_m \ge \frac{v(G) - (n-1)c_m}{n}.
\]
Plugging this into \eqref{eq:cost-vG-Lm} gives us
\begin{align*}
    \cost(S,\pi) = v(G)-L_m & \le v(G) - \frac{v(G)-(n-1)c_m}{n} \\
    & = \left(1-\frac{1}{n}\right)v(G) + \left(1-\frac{1}{n}\right)c_m.
\end{align*}
Using $c_m\le V_\text{max}$, we get
\begin{equation*}
    \cost(S,\pi) \le \left(1-\frac{1}{n}\right)v(G) + \left(1-\frac{1}{n}\right)V_\text{max}.
\end{equation*}

By Lemma~\ref{lem:opt-lb-n}, $\OPT(\I)\ge \left(1-\frac{1}{n}\right)v(G)$.
Therefore,
\[
    \cost(S,\pi) \le \OPT(\I) + \left(1-\frac{1}{n}\right)V_\text{max},
\]
which completes the proof.

\subsection{Tightness of the Additive Approximation Guarantee.} \label{app:eg_tightness_additive}
We provide an example showing that the additive error term in Theorem~\ref{thm:alg_1/2approx_additive} is tight.

\begin{example}
    Fix any $n\ge 2$. Consider an instance with identical valuation function $v$ and the following multiset of good values:
    for each $k\in\{1,\dots,n-1\}$ there are exactly $n$ goods of value $k$, and there are $n-1$ goods of value $n$.
    Thus $V_\text{max}=n$, the number of goods is $m=n(n-1)+(n-1)=n^2-1$, and
    \begin{align*}
        v(G)=n\sum_{k=1}^{n-1}k + (n-1)\cdot n
    & = n\cdot \frac{n(n-1)}2 + n(n-1)\\
    & = \frac{n(n-1)(n+2)}2.
    \end{align*}
    
    \begin{itemize}
    \item \emph{\textsc{Identical-Lowest-First} (Algorithm~\ref{alg:greedy_lowest_first})}
    Sorting gives the order
    \[
    \pi=(\underbrace{1,\dots,1}_{n},\underbrace{2,\dots,2}_{n},\dots,\underbrace{n-1,\dots,n-1}_{n},\underbrace{n,\dots,n}_{n-1}).
    \]
    For each $k\in\{1,\dots,n-1\}$, when the $n$ goods of value $k$ arrive, all bundle values are equal at the start of this block,
    so the greedy rule gives one such good to each agent. Thus, during this block, the gap (maximum minus minimum bundle value) is $c_t(S,\pi) =k$ for the first $n-1$ arrivals and $c_t(S,\pi)=0$ on the $n$-th, contributing $(n-1)k$ to the cumulative cost.
    After the first $n(n-1)$ arrivals, every agent has value
    \[
    \sum_{k=1}^{n-1}k = \frac{n(n-1)}2.
    \]
    The remaining $n-1$ goods (each of value $n$) are then given to $n-1$ distinct agents; after each such round, the gap is exactly $n$.
    Therefore,
    \begin{align*}
        \cost(S,\pi)
        & =\sum_{k=1}^{n-1}(n-1)k + (n-1)\cdot n \\
        & =(n-1)\left(\sum_{k=1}^{n-1}k+n\right) \\
        & =(n-1)\left(\frac{n(n-1)}2+n\right) \\
        & =\frac{n(n-1)(n+1)}2.
    \end{align*}
    \item \emph{Optimal solution.}
    We show a pair $(S^*,\pi^*)$ that achieves the universal lower bound of Lemma~\ref{lem:opt-lb-n}.
    Define an arrival order $\pi^*$ in $(n-1)$ blocks as follows: for each $k=n-1,n-2,\dots,1$, append the block
    \[
    \pi^* = (k,\;\underbrace{(k+1),(k+1),\dots,(k+1)}_{k\text{ times}},\;
    \underbrace{k,k,\dots,k}_{(n-k)\text{ times}}).
    \]
    Note that for each $k \leq n-1$, value $k$ appears $n-k+1$ times in block $k$ (once at the start, and $n-k$ times at the end), and appears $k-1$ times in block $k-1$. Thus, the total number of times value $k$ appears is $(n-k+1) + (k-1) = n$.
    Also, value $n$ appears exactly $n-1$ times in block $n-1$.
    
    Let $S^*$ allocate each arriving good to an agent with currently minimum bundle value (break ties arbitrarily).
    
    \smallskip
    \noindent\textbf{Claim.} After the block for $k$ is processed, the multiset of bundle values is
    \[
    \{\underbrace{L_k,\dots,L_k}_{k\text{ agents}},\underbrace{L_k+(k-1),\dots,L_k+(k-1)}_{n-k\text{ agents}}\},
    \]
    where $L_{n-1}=n$ and $L_{k-1}=L_k+k$ for $k\in\{2,\dots,n-1\}$.
    
    \smallskip
    \noindent\emph{Proof of claim.}
    For $k=n-1$, the first block is $(n\!-\!1,\underbrace{n,\dots,n}_{n-1},n\!-\!1)$, and allocating each good to a minimum-valued
    agent yields $n-1$ agents at value $n$ and one agent at value $2n-2=n+(n-2)$, so $L_{n-1}=n$.
    For the inductive step, assume the claim holds after the block for $k$ (so the current gap is $k-1$).
    The next block is
    \[
    \big(k-1,\underbrace{k,\dots,k}_{k-1},\underbrace{k-1,\dots,k-1}_{n-k+1}\big).
    \]
    The first $(k-1)$ goes to a minimum agent, bringing her up to the current maximum.
    Then the $(k-1)$ goods of value $k$ are given to the remaining minimum agents, making them the new maxima.
    Finally, each of the $n-k+1$ goods of value $(k-1)$ is given to a (new) minimum agent; after all are assigned,
    we obtain $k-1$ agents at value $L_k+k$ and $n-k+1$ agents at value $(L_k+k)+(k-2)$, i.e., the claim with $L_{k-1}=L_k+k$.
    \hfill$\square$
    
    \smallskip
    In particular, after the final block ($k=1$), all agents have equal value $L_1$. Unrolling the recurrence,
    \[
    L_1 = n+(n-1)+\cdots+2 = \frac{(n-1)(n+2)}2 = \frac{v(G)}n.
    \]
    Moreover, each arriving good is allocated to an agent with minimum bundle value.
    As in the proof of Lemma~\ref{lem:opt-lb-n}, this implies
    \begin{align*}
        \cost(S^*,\pi^*) & = v(G)-\min_i v(A_i) \\
        & = v(G)-\frac{v(G)}n=\left(1-\frac1n\right)v(G).
    \end{align*}
    By Lemma~\ref{lem:opt-lb-n}, $\OPT(\I)\ge (1-\frac1n)v(G)$, so we conclude $\OPT(\I)=(1-\frac1n)v(G)$.
    \end{itemize}
    Finally, we have that
    \begin{align*}
        & \cost(S,\pi)-\OPT(\I) \\
        & =\left(v(G)-\frac{n(n-1)}2\right)-\left(v(G)-\frac{v(G)}n\right) \\
        & =\frac{v(G)}n-\frac{n(n-1)}2 \\
        & =\frac{(n-1)(n+2)}2-\frac{n(n-1)}2 \\
        & =n-1 =\left(1-\frac1n\right)V_\text{max}.
    \end{align*}
    Thus, the additive term $\left(1-\frac1n\right)V_\text{max}$ cannot be improved in general.
\end{example}

\end{document}